\definecolor{Blue}{rgb}{0,0,0.9}
\begin{document}
\theoremstyle{plain}
\newtheorem{theorem}{Theorem}[section]
\newtheorem{proposition}[theorem]{Proposition}
\newtheorem{corollary}[theorem]{Corollary}
\newtheorem{lemma}[theorem]{Lemma}

\theoremstyle{definition}
\newtheorem{definition}[theorem]{Definition}
\newtheorem{example}[theorem]{Example}
\newtheorem{examples}[theorem]{Examples}
\newtheorem{assumption}[theorem]{Assumption}
\newtheorem{assumptions}[theorem]{Assumptions}
\newtheorem{remark}[theorem]{Remark}
\newtheorem{remarks}[theorem]{Remarks}
\newtheorem{notation}[theorem]{Notation}
\newtheorem{observation}[theorem]{Observation}
\newtheorem{construction}[theorem]{Construction}
\newtheorem{oproblem}[theorem]{Open Problem}
\newtheorem{problems}[theorem]{Open Problems}
\newenvironment{remarkNoNumber}{\vspace{1ex}\noindent
                          {\bf{Remark.}}\ \rm }
                          {\mbox{}\hfill\vspace{1ex}}
\newenvironment{odstavec}{\refstepcounter{theorem}\vspace{1ex}\noindent
                          {\bf\thetheorem.}\ \rm }
                         {\mbox{}\hfill\vspace{1ex}}

\newenvironment{REMARK}{\vspace{2ex}\noindent\hrule\vspace{2ex}\noindent\sf}
                       {\vspace{2ex}\noindent\mbox{}\hfill\hrule\vspace{2ex}}

\font\myHuge=cmr10 at 130pt
\font\myLarge=cmr10 at 60pt

\renewcommand{\phi}{\varphi}
\def\eps{\varepsilon}
\renewcommand{\o}{\cdot}

\def\op{{\mathit{op}}}
\def\co{{\mathit{co}}}

\def\id{{\mathit{id}}}
\def\Id{{\mathit{Id}}}
\def\const{{\mathit{const}}}
\def\inl{{\sf{inl}}\,}
\def\inr{{\sf{inr}}\,}
\def\inm{{\sf{inm}}\,}
\def\can{{\sf{can}}}
\def\ev{{\sf{ev}}}
\def\ins{{\sf{in}}}

\def\wh#1{\widehat{#1}}
\def\wt#1{\widetilde{#1}}
\def\ol#1{\overline{#1}}

\def\strukt#1{\langle #1\rangle}

\def\colim{\mathop{\mathrm{colim}}\limits}
\def\Colim#1#2{{#1}\mathop{*}{#2}}

\def\Lan#1#2{{\mathrm{Lan}}_{#1} #2}
\def\Ran#1#2{{\mathrm{Ran}}_{#1} #2}

\def\kan#1#2{\langle\!\langle #1,#2\rangle\!\rangle}

\def\Sum{\bullet}
\def\tensor{\otimes}
\def\cotensor{\pitchfork}

\def\yon{{\mathbb{y}}}
\def\mult{{\mathbb{m}}}

\def\kat#1{{\mathscr{#1}}}

\def\K{\kat{K}}
\def\A{\kat{A}}
\def\B{\kat{B}}
\def\C{\kat{C}}
\def\D{\kat{D}}
\def\E{\kat{E}}
\def\F{\kat{F}}
\def\K{\kat{K}}
\def\W{\kat{W}}
\def\X{\kat{X}}
\def\P{\kat{P}}

\def\Lang{{\mathcal{L}}}

\def\One{{\mathbb{1}}}
\def\Two{{\mathbb{2}}}

\def\strukt#1{\langle #1\rangle}
\def\less{\sqsubseteq}

\def\UU{{\mathbb{U}}}
\def\LL{\mathbb{L}}
\def\PP{\mathbb{P}}

\def\KK{{\mathsf{K}}}

\def\Set{{\mathsf{Set}}}
\def\Pos{{\mathsf{Pos}}}
\def\Pre{{\mathsf{Pre}}}
\def\Rel#1{{\mathsf{Rel}}(#1)}
\def\DL{{\mathsf{DL}}}

\renewcommand{\to}{\longrightarrow}

\def\refeq#1{(\ref{#1})}

\newcommand{\emphh}[1]{\textbf{#1}}

\renewcommand{\theequation}{\thesection.\arabic{equation}}

\renewcommand{\thesubsection}{\thesection.\Alph{subsection}}

\title{Relation Liftings on Preorders and Posets
\footnote{In terms of results and numbering, the material has appeared
  in our CALCO 2011 paper of the same title, but some typos were
  corrected and proofs and a small
  number of further comments were added.} 
}

\author{Marta B\'{\i}lkov\'{a}\\ {Institute of Computer Science, Academy of Sciences of the Czech
Republic, Prague} 
\and 
Alexander Kurz\thanks{Alexander Kurz acknowledges
        the support of EPSRC, EP/G041296/1.}\\Department of Computer Science, University of Leicester,
      United Kingdom
\and Daniela Petri\c{s}an\\Department of Computer Science, University of Leicester,
      United Kingdom
\and
 Ji\v{r}\'{\i} Velebil\thanks{Marta B\'{\i}lkov\'{a} and Ji\v{r}\'{\i} Velebil
         acknowledge the support of the grant No.~P202/11/1632
         of the Czech Science Foundation.}\\Faculty of Electrical Engineering, Czech Technical University
         in Prague, Czech Republic}

\maketitle
\pagestyle{plain}

\begin{abstract}
  The category $\Rel{\Set}$ of sets and relations can be described as
  a category of spans and as the Kleisli category for the powerset
  monad. A set-functor can be lifted to a functor on $\Rel{\Set}$ iff
  it preserves weak pullbacks. We show that these results extend to
  the enriched setting, if we replace sets by posets or
  preorders. Preservation of weak pullbacks becomes preservation of
  exact lax squares. As an application we present Moss's coalgebraic
  over posets.
\end{abstract}

\setcounter{tocdepth}{1}
\tableofcontents

\section{Introduction}
\label{sec:intro}

Relation lifting
\cite{barr:rel-alg,
ckw:wpb,herm-jaco:pred-lift} plays a
crucial role in coalgebraic logic, see eg
\cite{moss:cl,baltag:cmcs00,venema:coalg-aut}.

On the one hand, it is used to explain bisimulation: If
$T:\Set\to\Set$ is a functor, then the largest bisimulation on a
coalgebra $\xi:X\to TX$ is the largest fixed point of the operator
$(\xi\times\xi)^{-1}\circ\ol{T}$ on relations on $X$, where $\ol{T}$
is the lifting of $T$ to $\Rel{\Set}\to\Rel{\Set}$. (The precise
meaning of `lifting' will be given in the Extension
Theorem~\ref{th:extension}.)

On the other hand, Moss's coalgebraic logic \cite{moss:cl} is given by
adding to propositional logic a modal operator $\nabla$, the semantics
of which is given by applying $\ol{T}$ to the forcing relation
${\Vdash}\subseteq X\times{\cal L}$, where $\cal L$ is the set of
formulas: If $\alpha\in T(\cal L)$, then $x\Vdash\nabla\alpha\
\Leftrightarrow \ \xi(x)\mathrel{\ol{T}(\Vdash)}\alpha$.

In much the same way as $\Set$-coalgebras capture bisimulation,
$\Pre$-coalgebras and $\Pos$-coalgebras capture simulation
\cite{rutten-rel,worrell:cmcs00,hugh-jaco:simulation,klin:thesis,levy:simulation,bala-kurz:calco11}. This
suggests that, in analogy with the $\Set$-based case, a coalgebraic
understanding of logics for simulations should derive from the study
of $\Pos$-functors together with on the one hand their predicate
liftings and on the other hand their $\nabla$-operator. The study of
predicate liftings of $\Pos$-functors was begun in \cite{kkv:expr},
whereas here we lay the foundations for the $\nabla$-operator of a
$\Pos$-functor. In order to do this, we start with the notion of
monotone relation for the following reason. Let $(X,\le)$ and
$(X',\le')$ be the carriers of two coalgebras, with the preorders
$\le,\le'$ encoding the simulation relations on $X$ and $X'$,
respectively. Then a simulation between the two systems will be a
relation $R\subseteq X\times X'$ such that
${\ge\,;\,R\,;\,\ge'}\subseteq {R}$, that is, $R$ is a monotone
relation. Similarly, $\Vdash$ will be a monotone relation. To
summarise, the relations we are interested in are monotone, which
enables us to use techniques of enriched category theory (of which no
prior knowledge is assumed of the reader).

For the reasons outlined above, the purpose of the paper is to develop
the basic theory of relation liftings over preorders and posets. That
is, we replace the category $\Set$ of sets and functions by the
category $\Pre$ of preorders or $\Pos$ of posets, both with monotone
(i.e. order-preserving) functions. Section~2 introduces notation and
shows that (monotone) relations can be presented by spans and by
arrows in an appropriate Kleisli-category. Section~3 recalls the
notion of exact squares. Section~4 characterises the inclusion of
functions into relations $({-})_\diamond:\Pre\to\Rel{\Pre}$ by a
universal property and shows that the relation lifting $\ol{T}$ exists
iff $T$ satisfies the Beck-Chevalley-Condition (BCC), which says that
$T$ preserves exact squares. The BCC replaces the familiar condition
known from $\Rel{\Set}$, namely that $T$ preserves weak
pullbacks. Section~5 lists examples of functors (not) satisfying the
BCC and Section~6 gives the application to Moss's coalgebraic logic
over posets.

\medskip\noindent
{\bf Related work.}
The universal property of the embedding of a (regular)
category to the category of relations is stated in
Theorem~2.3 of~\cite{hermida}.
Theorem~\ref{th:universal_property} below generalizes
this in passing from
a category to a 
simple 2-category of (pre)orders.

Liftings of functors to categories of relations within
the realm of regular categories have also been studied
in~\cite{ckw:wpb}.

\section{Monotone relations}
\label{sec:relations}
In this section we summarize briefly the notion
of monotone relations on preorders and we show that
their resulting 2-category can be perceived in two
ways:
\begin{enumerate}
\item
Monotone relations are certain {\em spans\/}, called
{\em two-sided discrete fibrations\/}.
\item
Monotone relations form a {\em Kleisli category\/}
for a certain {\em KZ doctrine\/} on the category
of preorders.
\end{enumerate}

\begin{definition}
Given preorders $\A$ and $\B$, a
{\em monotone relation $R$ from $\A$ to $\B$\/},
denoted by
$$
\xymatrix{
\A
\ar[0,1]|-{\object @{/}}^-{R}
&
\B
}
$$
is a monotone map
$
R:\B^\op\times\A\to\Two
$
where by $\Two$ we denote the two-element poset on $\{ 0,1\}$
with $0\leq 1$.
\end{definition}

\begin{remark}
\label{rem:relation}
Unravelling the definition: for a binary relation $R$,
$R(b,a)=1$ means that $a$ and $b$ are related by $R$.
Monotonicity of $R$ then means that if $R(b,a)=1$ and
$b_1\leq b$ in $\B$ and $a\leq a_1$ in $\A$, then
$R(b_1,a_1)=1$.
\end{remark}

Relations compose in the obvious way.
Two relations as on the
left below
$$
\xymatrix{
\A
\ar[0,1]|-{\object @{/}}^-{R}
&
\B
}
\quad
\xymatrix{
\B
\ar[0,1]|-{\object @{/}}^-{S}
&
\C
}
\quad
\quad
\quad
\quad
\xymatrix{
\A
\ar[0,1]|-{\object @{/}}^-{S\o R}
&
\C
}
$$
compose to the relation on the right above by the
formula
\begin{equation}
\label{eq:relation_composition}
S\o R(c,a)=\bigvee_b R(b,a)\wedge S(c,b)
\end{equation}
hence the validity of $S\o R(c,a)$ is witnessed by at least one
$b$ such that both $R(b,a)$ and $S(c,b)$
hold. 

\begin{remark}
The supremum in formula~\refeq{eq:relation_composition} is, in fact,
exactly a coend 
in the sense of enriched category theory, see~\cite{kelly:book}.
\end{remark}

The above composition of relations is associative and it has monotone
relations $ \xymatrix{ \A \ar[0,1]|-{\object @{/}}^-{\A} & \A } $ as
units, where $\A(a,a')$ holds iff $a\leq a'$.
Moreover, the relations can be ordered pointwise: $R\to S$
means that $R(b,a)$ entails $S(b,a)$, for every $a$ and $b$.
Hence we have a 2-category of monotone relations
$\Rel{\Pre}$.

\begin{remark}
  Observe that one can form analogously the 2-category $\Rel{\Pos}$ of
  monotone relations on {\em posets\/}.  In all what follows
  one can work either with preorders or posets. We will focus on
  preorders in the rest of the paper, the modifications for posets 
  always being straightforward.
Observe that both $\Rel{\Pre}$
and $\Rel{\Pos}$ have the crucial property:
  The only isomorphism 2-cells are identities. 
\end{remark}

\noindent\emphh{Remark. }
The forgetful functor $V:\Pre\to\Set$ extends to a faithful functor
$\Rel{V}:\Rel{\Pre}\to\Rel{\Set}$ where $\Rel{\Set}$ is the usual
category of sets and relations.

\subsection{The functor $({-})_\diamond:\Pre\to\Rel{\Pre}$}
We describe now the functor
$
({-})_\diamond:\Pre\to\Rel{\Pre}
$
and show its main properties. The case of posets
is completely analogous.
For a monotone map $f:\A\to\B$ define two relations
$$
\xymatrix{
\A
\ar[0,1]|-{\object @{/}}^-{f_\diamond}
&
\B
&
&
\B
\ar[0,1]|-{\object @{/}}^-{f^\diamond}
&
\A
}
$$
by the formulas $f_\diamond(b,a)=\B(b,fa)$
and $f^\diamond(a,b)=\B(fa,b)$.

\begin{lemma}
\label{lem:diamond_adjunction} For every $f:\A\to\B$ in $\Pre$ 
there is an adjunction in $\Rel{\Pre}$
 $$ \xymatrix{ f_\diamond\dashv f^\diamond
: \B \ar[0,1]|-{\object @{/}} & \A }.$$
\end{lemma}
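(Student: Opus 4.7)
The plan is to unpack what an adjunction in the 2-category $\Rel{\Pre}$ amounts to and then verify the two required inequalities by direct computation with formula~\refeq{eq:relation_composition}. Since the hom-posets of $\Rel{\Pre}$ are posets (every diagram of 2-cells commutes trivially), an adjunction $f_\diamond \dashv f^\diamond$ is given simply by a unit $\A \to f^\diamond \o f_\diamond$ and a counit $f_\diamond \o f^\diamond \to \B$; the triangle identities are automatic.

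For the unit, I would compute
$$
(f^\diamond \o f_\diamond)(a', a) \;=\; \bigvee_{b\in\B} f_\diamond(b,a)\wedge f^\diamond(a',b)
\;=\; \bigvee_{b\in\B} \B(b,fa)\wedge \B(fa',b).
$$
The key observation is a Yoneda-style collapse: the right-hand side equals $\B(fa',fa)$, because transitivity in $\B$ gives the inequality $\leq$, while taking $b=fa'$ (or $b=fa$) realises the supremum. Thus $(f^\diamond\o f_\diamond)(a',a)=\B(fa',fa)$, and by monotonicity of $f$ this is implied by $\A(a',a)$. That is precisely the unit 2-cell $\A \to f^\diamond \o f_\diamond$.

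For the counit, a completely symmetric computation gives
$$
(f_\diamond \o f^\diamond)(b', b) \;=\; \bigvee_{a\in\A} \B(fa,b)\wedge \B(b',fa),
$$
and transitivity in $\B$ yields $(f_\diamond\o f^\diamond)(b',b)\to \B(b',b)$, which is the counit $f_\diamond \o f^\diamond \to \B$.

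There is really no serious obstacle here; everything follows from the definitions and the transitivity/monotonicity of the hom preorders. If anything needs care, it is making the Yoneda-style reduction $\bigvee_b \B(b,fa)\wedge \B(fa',b)=\B(fa',fa)$ explicit, since this is the one place where we are using more than a rote unwinding of the composition formula. Once this is noted, the unit and counit read off immediately, and the triangle identities are automatic in our poset-enriched setting.
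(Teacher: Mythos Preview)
Your proposal is correct and follows essentially the same approach as the paper: compute the composites $f^\diamond\cdot f_\diamond$ and $f_\diamond\cdot f^\diamond$ directly from~\refeq{eq:relation_composition}, collapse the supremum using transitivity (your ``Yoneda-style reduction''), and read off the unit and counit. The only cosmetic difference is that you observe the triangle identities are automatic because the hom-posets of $\Rel{\Pre}$ are posets, whereas the paper simply states they are easy to check; this is a valid and slightly cleaner way to finish.
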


\begin{proof}
This is easy: observe that if $\A(a,a')=1$, then
$$
f^\diamond\cdot f_\diamond (a,a')=
\bigvee_b f^\diamond(a',b)\wedge f_\diamond(b,a)=
\bigvee_b \B(fa',b) \wedge \B(b,fa)=
\B(fa,fa')=1
$$
since $f$ is a monotone map. Hence
$\eta^f:\A\to f^\diamond\cdot f_\diamond$ holds.

For the comparison $f_\diamond\cdot f^\diamond\to\B$, suppose that
$$
f_\diamond\cdot f^\diamond (b,b') =
\bigvee_a f_\diamond(b,a)\wedge f^\diamond(a,b') =
\bigvee_a \B(b,fa)\wedge \B(fa,b') =
1
$$
and use the transitivity of the order on $\B$ to conclude
that $\B(b,b')=1$.

It is now easy to show that the triangle equalities
$$
\vcenter{
\xymatrix{
f_\diamond
\ar[0,1]^-{f_\diamond\eta^f}
\ar @{=} [1,1]
&
f_\diamond\cdot f^\diamond\cdot f_\diamond
\ar[1,0]^{\eps^f f_\diamond}
\\
&
f_\diamond
}
}
\quad
\mbox{and}
\quad
\vcenter{
\xymatrix{
f^\diamond
\ar[0,1]^-{\eta^f f^\diamond}
\ar @{=} [1,1]
&
f^\diamond\cdot f_\diamond\cdot f^\diamond
\ar[1,0]^{f^\diamond \eps^f}
\\
&
f^\diamond
}
}
$$
hold and they witness the adjunction $f_\diamond\dashv f^\diamond$.
\end{proof}

\begin{remark}\label{rmk:diamond_adjunction}
Left adjoint morphisms in $\Rel{\Pre}$ 
can be
characterized as exactly those
of the form $f_\diamond$ for some monotone map $f$. 
Therefore, 
if
$
\xymatrix{
L\dashv R :
\B
\ar[0,1]|-{\object @{/}}
&
\A
}
$
in $\Rel{\Pre}$,
then there exists a monotone map
$f:\A\to\B$ such that $f_\diamond=L$ and $f^\diamond=R$.
Moreover, $f$ is uniquely determined by $L,R$ iff $\B$ is a poset.

To prove the claim, 
denote by $\eta:\A\to R\cdot L$ the unit and by
$\eps:L\cdot R\to\B$ the counit of $L\dashv R$.
First we prove that for every $a$ there is 
a 
$b_0$
such that
$$
R(a,b_0)\wedge L(b_0,a) = 1
$$
{and that $b_0$ is unique up to isomorphism:}
\begin{enumerate}
\item
Due to $\eta$ there is at least one $b$ such that
$$
R(a,b)\wedge L(b,a) = 1
$$
holds: since $\A(a,a)=1$, it is the case that
$R\cdot L(a,a)=1$.
\item
Suppose that
$$
R(a,b_1)\wedge L(b_1,a) = 1
\quad
\mbox{and}
\quad
R(a,b_2)\wedge L(b_2,a) = 1
$$
hold. Therefore the equalities
$$
R(a,b_1)\wedge L(b_2,a) = 1
\quad
\mbox{and}
\quad
R(a,b_2)\wedge L(b_1,a) = 1
$$
hold as well.
Then, due to $\eps$, we have that $\B(b_1,b_2)=1$
and $\B(b_2,b_1)=1$. 
{
In other words, we  have $b_1\le b_2$ and $b_2\le b_1$, that is, $b_1\cong b_2$ and, if $\B$ is a poset then, 
}
using antisymmetry, we conclude that $b_1=b_2$.
\end{enumerate}
Define $fa=b_0${, which determines $f$ uniquely iff $\B$ is a poset}. That the assignment $a\mapsto fa$
is monotone, follows from the existence of $\eta$.
{Finally, we need to prove $L=f_\diamond$, that is, $L(b,a)=\B(b,fa)$ for all $b,a$. We know  $L(fa,a)$ and $R(a,fa)$  by definition of $f$. 
Suppose $\B(b,fa)$, then $L(b,a)$ follows by monotonicity of $L$.
Conversely, suppose $L(b,a)$. Using $\eps:L\cdot R\to\B$, we have $\bigvee_a L(b,a)\wedge R(a,b') \le \B(b,b')$ and
choosing $b'=fa$, we get
$1= L(b,a)\wedge R(a,fa) \le \B(b,fa)$.
}
\qed
\end{remark}

Observe that if $f\to g$, then $f_\diamond\to g_\diamond$
holds. For if $\B(b,fa)=1$ then $\B(b,ga)=1$ holds
by transitivity, since $fa\leq ga$
holds.
Moreover, taking the lower diamond clearly maps an identity
monotone map $\id_\A:\A\to\A$ to the identity monotone
relation
$
\xymatrix{
\A
\ar[0,2]|-{\object @{/}}^-{\A=(\id_\A)_\diamond}
&
&
\A
}
$.
Further, taking the lower diamond preserves
composition:
$$
(g\cdot f)_\diamond(c,a)=
\C(c,gfa)=
\bigvee_b \C(c,gb) \wedge \B(b,fa)=
g_\diamond\cdot f_\diamond (c,a)
$$
Hence we have a functor $ ({-})_\diamond: \Pre\to\Rel{\Pre} $ enriched
in preorders.
Moreover, $({-})_\diamond$ is {\em locally fully
  faithful\/}, i.e., $f_\diamond\to g_\diamond$ holds iff $f\to g$
holds.

\subsection{$\Rel{\Pre}$ as a Kleisli category}

The 2-functor $({-})_\diamond:\Pre\to\Rel{\Pre}$ is a
{\em proarrow equipment with power objects\/} in the
sense of Section~2.5~\cite{marmolejo+rosebrugh+wood}.
This means that $({-})_\diamond$ has a right adjoint
$({-})^\dagger$ such that the resulting 2-monad
on $\Pre$ is a KZ doctrine and $\Rel{\Pre}$
is (up to equivalence) the corresponding Kleisli
2-category.
All of the following results are proved in the
paper~\cite{marmolejo+rosebrugh+wood}, we summarize it here
for further reference.

The 2-functor $({-})^\dagger$ works as follows:
\begin{enumerate}
\item
On objects, $\A^\dagger = [\A^\op,\Two]$, the lowersets
on $\A$, ordered by inclusion.
\item
For a relation $R$ from $\A$ to $\B$, the functor
$R^\dagger:[\A^\op,\Two]\to [\B^\op,\Two]$ is defined
as the left Kan extension of $a\mapsto R({-},a)$
along the Yoneda embedding $\yon_\A:\A\to [\A^\op,\Two]$.
This can be expressed by the formula:
$$
R^\dagger (W) =
b\mapsto \bigvee_a Wa \wedge R(b,a)
$$
i.e., $b$ is in the lowerset $R^\dagger (W)$ iff there
exists $a$ in $W$ such that $R(b,a)$ holds.
\end{enumerate}
It is easy to prove that $({-})^\dagger$ is a 2-functor
and that $({-})^\dagger\dashv ({-})_\diamond$ is a 2-adjunction
of a KZ type. The latter means that if we denote by
\begin{equation}
\label{eq:KZ}
(\LL,\yon,\mult)
\end{equation}
the resulting 2-monad on $\Pre$, then we obtain the string of
adjunctions
$
\LL(\yon_\A)
\dashv
\mult_\A
\dashv
\yon_{\LL\A}
$,
see~\cite{marmolejo1}, \cite{marmolejo2}, for more details.

The unit of the above KZ doctrine is the Yoneda embedding
$\yon_\A:\A\to [\A^\op,\Two]$ and the multiplication
$\mult_A:[[\A^\op,\Two]^\op,\Two]\to [\A^\op,\Two]$
is 
the left Kan extension of identity
on $[\A^\op,\Two]$ along 
$\yon_{[\A^\op,\Two]}$.
In more detail:
$$
\mult_\A (\W) =
a\mapsto \bigvee_W \W(W)\wedge W(a)
$$
where $\W$ is in $[[\A^\op,\Two]^\op,\Two]$ and
$W$ is in $[\A^\op,\Two]$. Hence $a$ is in the
lowerset $\mult_\A (\W)$ iff there exists
a lowerset $W$ in $\W$ such that $a$ is in $W$.
The following result is proved in Section~2.5
of~\cite{marmolejo+rosebrugh+wood}:

\begin{proposition}
\label{prop:relations_are_kleisli}
The 2-functor $({-})_\diamond:\Pre\to\Rel{\Pre}$
exhibits $\Rel{\Pre}$ as a Kleisli category
for the KZ doctrine $(\LL,\yon,\mult)$.
\end{proposition}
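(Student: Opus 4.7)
The plan is to exhibit an isomorphism of 2-categories $K\colon \Pre_\LL \to \Rel{\Pre}$ between the Kleisli 2-category of $(\LL,\yon,\mult)$ and $\Rel{\Pre}$, which agrees with $({-})_\diamond$ when composed with the canonical embedding $\Pre \to \Pre_\LL$. Since both 2-categories have the same objects (preorders), the key point is to establish a natural bijection on hom-preorders together with compatibility with identities and composition.

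First I would note the currying bijection: a monotone relation $R : \B^\op \times \A \to \Two$ corresponds exactly to a monotone map $\wh{R} : \A \to [\B^\op,\Two] = \LL\B$ by $\wh{R}(a)(b) = R(b,a)$. This bijection is locally monotone in each variable, and so defines an isomorphism of hom-preorders $\Rel{\Pre}(\A,\B) \cong \Pre(\A,\LL\B) = \Pre_\LL(\A,\B)$. I would take $K$ to be the identity on objects and the inverse currying on 1-cells and 2-cells.

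Next I would verify that $K$ sends the identity morphism of $\A$ in $\Pre_\LL$, namely $\yon_\A : \A \to \LL\A$, to the identity relation. By definition of the Yoneda embedding, $\yon_\A(a)(a') = \A(a',a)$, so the relation corresponding to $\yon_\A$ is exactly $(a',a) \mapsto \A(a',a)$, which is the unit of $\Rel{\Pre}$ on $\A$. For 2-functoriality I would also check that $K$ preserves whiskering and horizontal composition of 2-cells, which reduces to the monotonicity of currying.

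The main technical step is matching composition. For $g : \A \to \LL\B$ and $h : \B \to \LL\C$, Kleisli composition is $h \star g = \mult_\C \circ \LL h \circ g$. Unfolding, $\LL h : \LL\B \to \LL\LL\C$ is computed by left Kan extension and gives the lowerset
\[
\LL h (W) \;=\; \{\, V \in \LL\C \mid \exists b,\; W(b) \wedge V \leq h(b)\,\}.
\]
Applying $\mult_\C$ and using the formula $\mult_\C(\W)(c) = \bigvee_W \W(W)\wedge W(c)$, the suprema collapse (the contribution of a given $b$ is maximized by $V = h(b)$) to
\[
(h \star g)(a)(c) \;=\; \bigvee_b g(a)(b) \wedge h(b)(c).
\]
Translating via currying with $g = \wh{R}$ and $h = \wh{S}$, this is exactly $S\o R(c,a)$ from~\refeq{eq:relation_composition}. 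Hence $K(h \star g) = S \o R = K(h)\o K(g)$, and $K$ is an isomorphism of 2-categories under which $({-})_\diamond = K \circ (\yon_{({-})})_*$ as required. The main obstacle is the double-supremum collapse in the unfolding of $\mult_\C \circ \LL h$; everything else is bookkeeping around the currying adjunction.
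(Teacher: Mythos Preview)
Your argument is correct. The currying isomorphism $\Rel{\Pre}(\A,\B)\cong\Pre(\A,\LL\B)$ is precisely the hom-isomorphism underlying the 2-adjunction $({-})_\diamond\dashv({-})^\dagger$ described just before the proposition, and your unfolding of $\mult_\C\circ\LL h\circ g$ is accurate: the inner supremum over $V\leq h(b)$ indeed collapses to $h(b)(c)$, yielding exactly the composition formula~\refeq{eq:relation_composition}.

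As to comparison with the paper: the paper does not actually supply a proof here. It simply records that the result is established in Section~2.5 of~\cite{marmolejo+rosebrugh+wood}, in the general setting of proarrow equipments with power objects. Your proof is therefore more concrete and self-contained than what the paper offers, carrying out explicitly the computation that the reference handles abstractly. One small remark: the paper phrases the identification of $\Rel{\Pre}$ with the Kleisli 2-category as holding ``up to equivalence'' (inheriting the generality of~\cite{marmolejo+rosebrugh+wood}), whereas your argument shows that in this specific case currying gives a genuine identity-on-objects isomorphism of 2-categories; this is a slight strengthening, not a discrepancy.
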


\subsection{Relations as spans}
Monotone relations are going to be exactly certain spans, called
{\em two-sided discrete fibrations\/}
\cite{street:fibrations}. 

\begin{definition}
\label{def:span}
A {\em span\/}
$(d_0,\E,d_1):\B\to \A$ from $\B$ to $\A$
is 
a diagram
$$
\xymatrix@R=10pt@C=15pt{
&
\E
\ar[1,-1]_{d_0}
\ar[1,1]^{d_1}
&
\\
\A
&
&
\B
}
$$
of monotone maps. The preorder
$\E$ is called the {\em vertex\/} of the span $(d_0,\E,d_1)$.
\end{definition}

\begin{remark}
Given a span $(d_0,\E,d_1):\B\to\A$, the following intuitive
notation might prove useful:
a typical element of $\E$ will be denoted by
a wiggly arrow
$$
\xymatrix{
d_0(e)
\ar @{~>} [0,1]^-{e}
&
d_1(e)
}
$$
and $d_0(e)$ will be the {\em domain\/} of $e$
and $d_1(e)$ the {\em codomain\/} of $e$.
\end{remark}

\begin{definition}
\label{def:fibrations}
A span $(d_0,\E,d_1):\B\to \A$ in $\Pre$ is a {\em two-sided discrete
  fibration\/} (we will say just {\em fibration\/} in what follows),
if the following three conditions are satisfied. For every
situation below on the left, there is a unique fill in on the
right, denoted by $(d_0)_*(e')$, respectively $(d_1)_*(e)$:
$$
\xymatrix{
a
\ar[1,0]
&
\\
a'
\ar @{~>} [0,1]_{e'}
&
b'
}
\quad\quad
\quad\quad
\quad\quad
\xymatrix{
a
\ar @{~>} [0,1]^{(d_0)_*(e')}
\ar[1,0]
&
b'
\ar @{=} [1,0]
\\
a'
\ar @{~>} [0,1]_{e'}
&
b'
}
$$

$$
\xymatrix{
a
\ar @{~>} [0,1]^{e}
&
b
\ar[1,0]
\\
&
b'
}
\quad\quad
\quad\quad
\quad\quad
\xymatrix{
a
\ar @{~>} [0,1]^{e}
\ar @{=} [1,0]
&
b
\ar[1,0]
\\
a
\ar @{~>} [0,1]_{(d_1)_*(e)}
&
b'
}
$$
Every situation on the left can be written as depicted on the right:
$$
\xymatrix{
a
\ar @{~>} [0,1]^{e}
\ar[1,0]
&
b
\ar[1,0]
\\
a'
\ar @{~>} [0,1]_{e'}
&
b'
}
\quad\quad
\quad\quad
\quad\quad
\xymatrix{
a
\ar @{~>} [0,1]^{e}
\ar @{=} [1,0]
&
b
\ar[1,0]
\\
a
\ar @{~>} [0,1]
\ar[1,0]
&
b'
\ar @{=} [1,0]
\\
a'
\ar @{~>} [0,1]_{e'}
&
b'
}
$$
\end{definition}

\noindent\emphh{Remark. }
Fibrations are jointly mono. In particular, if $\B,\A$ are discrete then $(d_0,\E,d_1):\B\to\A$ is a fibration iff it is a jointly mono.

\begin{definition}
\label{def:comma_object} A {\em comma object\/} of monotone maps
$f:\A\to\C$, $g:\B\to\C$ is a diagram
$$
\xymatrix{ f/g \ar[0,1]^-{p_1} \ar[1,0]_{p_0} \ar @{}
[1,1]|{\nearrow} & \B \ar[1,0]^{g}
\\
\A \ar[0,1]_-{f} & \C }
$$
where elements of the preorder $f/g$ are pairs $(a,b)$ with
$f(a)\leq g(b)$ in $\C$, the preorder on $f/g$ is defined
pointwise and $p_0$ and $p_1$ are the projections. 
The whole ``lax commutative square'' as above will be called a
{\em comma square\/}.
\end{definition}

\begin{example}
Every span $(p_0,f/g,p_1):\A\to\B$ arising
from a comma object of $f:\A\to\C$, $g:\B\to\C$
is a fibration.
\end{example}

A monotone relation $\xymatrix@1{\B\ar[0,1]|-{\object @{/}}^-{R}&\A}$
induces a fibration $(d_0,\E,d_1):\B\to\A$ with $\E=\{(a,b)\mid
R(a,b)=1\}$ ordered by $(a,b)\leq (a',b')$, if $a\leq a'$ and $b\leq
b'$; and $(d_0,\kat{E},d_1)$ induces the relation $R(a,b)=1\
\Leftrightarrow \ \exists e\in\E\,.\,d_0(e)=a, d_1(e)=b$.

\begin{proposition}
\label{prop:profunctors=fibrations} Fibrations in $\Pre$
correspond exactly to monotone relations. Moreover, if
$(d_0,\E,d_1):\B\to\A$ is the fibration corresponding to a
relation $\xymatrix@1{R:\B\ar[0,1]|-{\object @{/}}&\A}$, then
$R=(d_0)_\diamond\cdot(d_1)^\diamond$.
\end{proposition}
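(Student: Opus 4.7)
The plan is to establish the bijective correspondence between fibrations and monotone relations first, then verify the composition formula. Starting from a monotone relation $R:\B\to\A$, I form the span with vertex $\E=\{(a,b)\mid R(a,b)=1\}$ (carrying the product preorder inherited from $\A\times\B$) and projections $d_0(a,b)=a$, $d_1(a,b)=b$. The first lifting axiom of Definition~\ref{def:fibrations} holds because, given $a\le a'$ in $\A$ and $(a',b')\in\E$, monotonicity of $R$ (spelled out in Remark~\ref{rem:relation}) gives $R(a,b')=1$, so $(a,b')\in\E$ is the required fill-in; it is automatically unique since elements of $\E$ are pairs. The second lifting is symmetric, and a general lax square factors by applying the two lifts in succession.

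Conversely, a fibration $(d_0,\E,d_1)$ gives the relation $R(a,b)=1$ iff some $e\in\E$ has $d_0(e)=a$ and $d_1(e)=b$, and monotonicity of this $R$ follows directly from the two lifts. The composite relation-to-span-to-relation is tautologically the identity. For span-to-relation-to-span, joint monicity of fibrations (noted right after Definition~\ref{def:fibrations}) identifies $\E$ with a subset of $\A\times\B$, so the main technical point is matching the orders. One direction is monotonicity of the projections. For the converse, given $d_0(e)\le d_0(e')$ and $d_1(e)\le d_1(e')$, apply the first lift to $e'$ along $d_0(e)\le d_0(e')$ to obtain $(d_0)_*(e')\le e'$ with projections $(d_0(e),d_1(e'))$, and apply the second lift to $e$ along $d_1(e)\le d_1(e')$ to obtain $e\le(d_1)_*(e)$ with the same projections; joint monicity collapses these to a single element sandwiched between $e$ and $e'$, yielding $e\le e'$. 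This double-lift argument is the crux of the proof, and where I expect the main effort to lie.

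Finally, for the formula $R=(d_0)_\diamond\cdot(d_1)^\diamond$, unwinding the definitions of $(-)_\diamond$, $(-)^\diamond$ and of relation composition~\refeq{eq:relation_composition} gives
\[
((d_0)_\diamond\cdot(d_1)^\diamond)(a,b)=\bigvee_{e\in\E}\B(d_1(e),b)\wedge\A(a,d_0(e)).
\]
If $R(a,b)=1$, picking $e\in\E$ with $d_0(e)=a$ and $d_1(e)=b$ contributes $1$ to the supremum. Conversely, if the supremum is $1$, some $e$ satisfies $a\le d_0(e)$ and $d_1(e)\le b$; applying $(d_0)_*$ to $e$ along $a\le d_0(e)$ and then $(d_1)_*$ along $d_1(e)\le b$ produces an element of $\E$ with projections exactly $a$ and $b$, witnessing $R(a,b)=1$.
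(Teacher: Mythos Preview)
Your proof is correct and follows the same Grothendieck-construction approach as the paper. In fact you go further than the paper's own proof: the paper only writes out the two constructions and checks that one side yields a fibration and the other a monotone relation, whereas you additionally verify that the constructions are mutually inverse (via the joint-monicity-plus-double-lift argument for matching the orders) and you prove the formula $R=(d_0)_\diamond\cdot(d_1)^\diamond$ directly---both points the paper leaves implicit.
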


\begin{proof}
This is seen by the following
{\em Grothendieck construction}:
\begin{enumerate}
\item Given a relation $R:\A^\op\times\B\to\Two$,
      define the span
      $(d_0,\E,d_1):\B\to\A$ as follows:
      \begin{enumerate}
      \item Objects of $\E$ are pairs $(a,b)$, where $a$ and
            $b$ are objects of $\A$ and $\B$, respectively,
            with $R(a,b)=1$. A typical object is going to be denoted
            by
            $$
            \xymatrix{
            a
            \ar @{~>} [0,1]^-{(a,b)}
            &
            b
            }
            $$
      \item The preorder relation on $\E$: we put
            $(a,b)\leq (a',b')$, if $a\leq a'$,
            $b\leq b'$ in $\A$, $\B$, respectively.
            Diagrammatically:
            $$
            \xymatrix{
            a
            \ar @{~>} [0,1]^-{(a,b)}
            \ar[1,0]_{}
            &
            b
            \ar[1,0]^{}
            \\
            a'
            \ar @{~>} [0,1]_-{(a',b')}
            &
            b'
            }
            $$
            (where we write, e.g., $a\to a'$ to denote $a\leq a'$).
      \item The monotone maps $d_0:\kat{E}\to\kat{A}$ and
            $d_1:\kat{E}\to\kat{B}$ are then the
            obvious domain and codomain projections.
      \end{enumerate}
      We verify now that $(d_0,\kat{E},d_1)$ is a
      fibration.
      \begin{enumerate}
      \item Suppose
            $$
            \xymatrix{
            a
            \ar[1,0]
            &
            \\
            a'
            \ar @{~>} [0,1]_-{(a',b')}
            &
            b'
            }
            $$
            is given. We define the cartesian lift
            as follows:
            $$
            \xymatrix{
            a
            \ar @{~>} [0,1]^-{(a,b')}
            \ar[1,0]
            &
            b'
            \ar @{=} [1,0]
            \\
            a'
            \ar @{~>} [0,1]_-{(a',b')}
            &
            b'
            }
            $$
            Here we have used the fact that $R$ is monotone.
      \item Given
            $$
            \xymatrix{
            a
            \ar @{~>} [0,1]^-{(a,b)}
            &
            b
            \ar[1,0]
            \\
            &
            b'
            }
            $$
            and $g:b\to b'$, proceed analogously to the above: define
            the unique opcartesian lift
            as follows
            $$
            \xymatrix{
            a
            \ar @{~>} [0,1]^-{(a,b)}
            \ar @{=} [1,0]
            &
            b
            \ar[1,0]
            \\
            a
            \ar @{~>} [0,1]_{(a,b')}
            &
            b'
            }
            $$
      \item Suppose we are given a morphism
            $$
            \xymatrix{
            a
            \ar @{~>} [0,1]^-{(a,b)}
            \ar[1,0]
            &
            b
            \ar[1,0]
            \\
            a'
            \ar @{~>} [0,1]_-{(a',b')}
            &
            b'
            }
            $$
            in $\kat{E}$. Then it is straightforward to see that it is
            equal to the composite
            $$
            \xymatrix{
            a
            \ar @{~>} [0,1]^-{(a,b)}
            \ar @{=} [1,0]
            &
            b
            \ar[1,0]
            \\
            a
            \ar @{~>} [0,1]^-{(a,b')}
            \ar[1,0]
            &
            b'
            \ar @{=} [1,0]
            \\
            a'
            \ar @{~>} [0,1]_-{(a',b')}
            \ar @{<-} `l[uu] `[uu] [uu]
            &
            b'
            \ar @{<-} `r[uu] `[uu] [uu]
            }
            $$
      \end{enumerate}
\item Given a fibration
      $(d_0,\kat{E},d_1):\B\to\A$,
      consider the following definition
      $$
      R(a,b)=1
      \quad
      \mbox{iff}
      \quad
      \mbox{there is $e$ in $\E$ with $d_0(e)=a$ and $d_1(e)=b$}
      $$
      That the assignment $(a,b)\mapsto R(a,b)$ gives
      a monotone map
      $$
      R:\A^\op\times\B\to\Two
      $$
      is taken care of by the three conditions of Definition~\ref{def:fibrations}.
      In other words, we have obtained a relation from $\B$ to $\A$.
\end{enumerate}
\end{proof}

\noindent\emphh{Corollary. }  If
$(d_0,\E,d_1):\B\to\A$ is the fibration corresponding
$\xymatrix@1{R:\B\ar[0,1]|-{\object @{/}}&\A}$, then
$\Rel{V}R=\Rel{V}((d_0)_\diamond\cdot(d_1)^\diamond)=(Vd_0)_\diamond\cdot(Vd_1)^\diamond$.

\begin{proof}
  On the left we have that
  $\Rel{V}((d_0)_\diamond\cdot(d_1)^\diamond)(b,a)=1$ iff there is $w\in\E$
  such that $b\le d_0(w)$ and $d_1(w)\le a$. On the right we have that
  $((Vd_0)_\diamond\cdot(Vd_1)^\diamond)(b,a)=1$ iff there is $w\in\E$
  such that $b=d_0(w)$ and $d_1(w)=a$. Since $(d_0,\E,d_1)$ is a
  fibration the two conditions are equivalent.
\end{proof}

\begin{remark}
\label{rem:fibrations}
The proposition can be extended to any category enriched in $\Pre$.
The details are as follows.
A span $(d_0,\E,d_1):\B\to \A$ in $\Pre$ is a
{\em two-sided discrete fibration\/},
if the following three conditions are satisfied:
\begin{enumerate}
\item For each $m:\K\to \E$, $a,a':\K\to \A$, $b:\K\to \B$ and
      $\alpha:a'\to a$ such that triangles
      $$
      \xymatrix{
      \K
      \ar[0,1]^-{m}
      \ar[1,1]_{a}
      &
      \E
      \ar[1,0]^{d_0}
      &
      &
      \K
      \ar[0,1]^-{m}
      \ar[1,1]_{b}
      &
      \E
      \ar[1,0]^{d_1}
      \\
      &
      \A
      &
      &
      &
      \B
      }
      $$
      commute, there is a unique $\bar{m}:\K\to \E$ and a unique
      $d_0^*(\alpha):\bar{m}\to m$ such that
      $$
      \xymatrix{
      \K
      \ar[0,1]^-{\bar{m}}
      \ar[1,1]_{a'}
      &
      \E
      \ar[1,0]^{d_0}
      &
      &
      \K
      \ar[0,1]^-{\bar{m}}
      \ar[1,1]_{b}
      &
      \E
      \ar[1,0]^{d_1}
      \\
      &
      \A
      &
      &
      &
      \B
      }
      $$
      and
      $$
      \xymatrix{
      \K
      \ar @<1.3ex> [0,1]^-{\bar{m}}
      \ar @<-1.3ex> [0,1]_-{m}
      \ar @{} [0,1]|-{\downarrow d_0^*(\alpha)}
      &
      \E
      \ar[0,1]^-{d_0}
      &
      \A
      &
      =
      &
      \K
      \ar @<1.3ex> [0,1]^-{a'}
      \ar @<-1.3ex> [0,1]_-{a}
      \ar @{} [0,1]|-{\downarrow\alpha}
      &
      \A
      \\
      \K
      \ar @<1.3ex> [0,1]^-{\bar{m}}
      \ar @<-1.3ex> [0,1]_-{m}
      \ar @{} [0,1]|-{\downarrow d_0^*(\alpha)}
      &
      \E
      \ar[0,1]^-{d_1}
      &
      \B
      &
      =
      &
      \K
      \ar[0,1]^-{b}
      &
      \B
      }
      $$
      commute. The 2-cell $d_0^*(\alpha)$ is called the {\em cartesian lift\/}
      of $\alpha$.
\item For each $m:\K\to \E$, $a:\K\to \A$, $b,b':\K\to \B$ and
      $\beta:b\to b'$ such that triangles
      $$
      \xymatrix{
      \K
      \ar[0,1]^-{m}
      \ar[1,1]_{a}
      &
      \E
      \ar[1,0]^{d_0}
      &
      &
      \K
      \ar[0,1]^-{m}
      \ar[1,1]_{b}
      &
      \E
      \ar[1,0]^{d_1}
      \\
      &
      \A
      &
      &
      &
      \B
      }
      $$
      commute, there is a unique $\bar{m}:\K\to \E$ and a unique
      $d_1^*(\beta):m\Rightarrow \bar{m}$ such that
      $$
      \xymatrix{
      \K
      \ar[0,1]^-{\bar{m}}
      \ar[1,1]_{a}
      &
      \E
      \ar[1,0]^{d_0}
      &
      &
      \K
      \ar[0,1]^-{\bar{m}}
      \ar[1,1]_{b'}
      &
      \E
      \ar[1,0]^{d_1}
      \\
      &
      \A
      &
      &
      &
      \B
      }
      $$
      and
      $$
      \xymatrix{
      \K
      \ar @<1.3ex> [0,1]^-{m}
      \ar @<-1.3ex> [0,1]_-{\bar{m}}
      \ar @{} [0,1]|-{\downarrow d_1^*(\beta)}
      &
      \E
      \ar[0,1]^-{d_0}
      &
      \A
      &
      =
      &
      \K
      \ar[0,1]^-{a}
      &
      \A
      \\
      \K
      \ar @<1.3ex> [0,1]^-{m}
      \ar @<-1.3ex> [0,1]_-{\bar{m}}
      \ar @{} [0,1]|-{\downarrow d_1^*(\beta)}
      &
      \E
      \ar[0,1]^-{d_1}
      &
      \B
      &
      =
      &
      \K
      \ar @<1.3ex> [0,1]^-{b}
      \ar @<-1.3ex> [0,1]_-{b'}
      \ar @{} [0,1]|-{\downarrow\beta}
      &
      \B
      }
      $$
      commute. The 2-cell $d_1^*(\beta)$ is called the {\em opcartesian lift\/}
      of $\beta$.
\item Given any $\sigma:m\Rightarrow m':K\to E$, then the composite
      $d_0^*(d_0\sigma)\o d_1^*(d_1\sigma)$ is defined and it is equal to $\sigma$.
\end{enumerate}
The easiest way of treating fibrations abstractly is that
they are {\em algebras\/} for two (2-)monads simultaneously: they
are {\em two-sided modules\/} in a certain precise sense. See~\cite{street1}
and~\cite{street:fibrations}.
\end{remark}

\begin{example}
\label{ex:f_diamond=span}
Suppose that $f:\A\to\B$ is monotone. Recall the
relations
$
\xymatrix@C=15pt{
f_\diamond:\A
\ar[0,1]|-{\object @{/}}^-{}
&
\B
}
$
and
$
\xymatrix@C=15pt{
f^\diamond:\B
\ar[0,1]|-{\object @{/}}^-{}
&
\A
}.
$
Their corresponding fibrations are the spans
$$
\xymatrixcolsep{1pc}
\xymatrix@R=10pt@C=15pt{
&
\id_\B/f
\ar[1,-1]_{p_0}
\ar[1,1]^{p_1}
&
&
&
f/\id_\B
\ar[1,-1]_{p_0}
\ar[1,1]^{p_1}
&
\\
\B
&
&
\A
&
\A
&
&
\B
}
$$
arising from the respective comma squares.
\end{example}

\begin{example}
\label{ex:elementhood}
The relation $(\yon_\A)^\diamond$ from $\LL\A$
to $\A$ will be called the {\em elementhood\/}
relation and denoted by $\in_\A$,
since
$
(\yon_\A)^\diamond(a,A)
=
\LL\A(\yon_\A a,A)
=
A(a)
$
holds by the Yoneda Lemma.
\end{example}

\subsection{Composition of fibrations}\label{sec:comp-fib}
Suppose that we have two fibrations as on the left below.
We want to form their composite $\E\tensor\F$ {\em as a fibration\/}.
$$
\xymatrixcolsep{1pc}
\xymatrix@R=20pt{
&
\E
\ar[1,-1]_{d^\E_0}
\ar[1,1]^{d^\F_1}
&
&
&
\F
\ar[1,-1]_{d^\F_0}
\ar[1,1]^{d^\F_1}
&
\\
\C
&
&
\B
&
\B
&
&
\A
}
\quad\quad
\quad\quad\quad
\xymatrixcolsep{1pc}
\xymatrix@R=20pt
{
&
\E\tensor\F
\ar[1,-1]_{d^{\E\tensor \F}_0}
\ar[1,1]^{d^{\E\tensor \F}_1}
&
\\
\C
&
&
\A
}
$$
The idea is similar to the ordinary relations: the composite
is going to be a quotient of a pullback of spans, this time
the quotient will be taken by a map that is
surjective on objects, hence {\em absolutely dense\/}.

\begin{remark}
\label{rem:absolutely_dense}
A monotone map $e:\A\to\B$ is called
{\em absolutely dense\/} (see~\cite{absv}
and~\cite{bv}) iff
$$
\B(b,b')=
\bigvee_a \B(b,ea)\wedge\B(ea,b'),
$$
that is, $e$ is absolutely dense iff $e_\diamond\cdot e^\diamond=\id$.
Clearly, every monotone map surjective on objects is absolutely dense. The converse is true if $\B$ is a poset. If $\B$ is a
preorder, then $e$ is absolutely dense when each strongly connected
component of $\B$ contains at least one element in the image of $e$.
\end{remark}

In defining the composition of fibrations
we proceed as follows: construct the pullback
$$
\xymatrix{
\E\circ\F
\ar[0,1]^-{q_1}
\ar[1,0]_{q_0}
&
\F
\ar[1,0]^{d^\F_0}
\\
\E
\ar[0,1]_-{d^\E_1}
&
\B
}
$$
and define $\E\tensor\F$ to be the
following preorder:
\begin{enumerate}
\item
Objects are wiggly arrows of the form
$
\xymatrix@1{
c\ar @{~>}[0,1]
&
a
}
$
such that there exists $b\in \B$ with
 $(
 \xymatrix@1{
 c\ar@{~>}[0,1]
 &
 b
 },
 \xymatrix@1{
 b
 \ar @{~>}[0,1]
 &
 a}
 )\in \E\circ\F.
 $
\item
Put
$
\xymatrix@1{
c\ar @{~>}[0,1]
&
a
}
$
to be less or equal to
$
\xymatrix@1{
c'\ar @{~>}[0,1]
&
a'
}
$
iff $c\leq c'$ and $a\leq a'$.
\end{enumerate}
Define a monotone map $w:\E\circ\F\to\E\tensor\F$ in the obvious
way and observe that it is surjective on objects and, hence, absolutely dense.

We equip now $\E\tensor\F$ with the obvious
projections
$d^{\E\tensor\F}_0:\E\tensor\F\to\C$ and
$d^{\E\tensor\F}_1:\E\tensor\F\to\A$.
Then the following result is immediate.

\begin{lemma}\label{lem:tensor-fibrations}
The span $(d^{\E\tensor\F}_0,\E\tensor\F,d^{\E\tensor\F}_1):\A\to\C$
is a fibration.
\end{lemma}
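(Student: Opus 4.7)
The plan is to verify directly the three conditions of Definition~\ref{def:fibrations} for the span $(d^{\E\tensor\F}_0, \E\tensor\F, d^{\E\tensor\F}_1):\A\to\C$, exploiting the fact that the construction of $\E\tensor\F$ already bakes in the required universal properties. The key structural observation to keep in mind throughout is that, by the very definition of $\E\tensor\F$, a wiggly arrow from $c$ to $a$ amounts to just a pair $(c,a)\in \C\times\A$ for which there exists some mediating $b\in \B$ together with wiggly arrows $c\leadsto_\E b$ in $\E$ and $b\leadsto_\F a$ in $\F$, and the preorder on $\E\tensor\F$ is componentwise. In particular, every wiggly arrow in $\E\tensor\F$ is uniquely determined by its endpoints, which will automatically discharge the uniqueness clauses of the first two fibration axioms.

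For the existence of cartesian lifts, suppose we are given $c\to c'$ in $\C$ and a wiggly arrow $c'\leadsto a'$ in $\E\tensor\F$. I would pick a witness $b\in \B$ with $c'\leadsto_\E b$ and $b\leadsto_\F a'$, apply the cartesian lift in the fibration $\E$ to $c\to c'$ and $c'\leadsto_\E b$ to obtain $c\leadsto_\E b$, and then observe that the pair $(c\leadsto_\E b,\, b\leadsto_\F a')$ lies in the pullback $\E\circ\F$ and hence projects under the canonical map $w$ to the required wiggly arrow $c\leadsto a'$ in $\E\tensor\F$. The existence of opcartesian lifts is symmetric, using the opcartesian lift provided by the fibration $\F$ along $a\to a'$. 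The third condition, that every morphism decomposes as an opcartesian lift followed by a cartesian one, is now immediate from the componentwise order: any morphism $(c,a)\leq(c',a')$ in $\E\tensor\F$ factors through $(c,a')$, and the two intermediate steps coincide with the unique opcartesian and cartesian lifts just constructed.

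The main potential pitfall is that the construction of each lift ostensibly depends on the choice of witness $b\in \B$; however, since an object of $\E\tensor\F$ is determined solely by the pair $(c,a)\in\C\times\A$, the resulting wiggly arrow in $\E\tensor\F$ does not depend on that choice, so no coherence issue arises. In short, the quotient $w:\E\circ\F\to\E\tensor\F$ has already absorbed the non-uniqueness present in the pullback, leaving only the existence properties to be inherited from the two input fibrations.
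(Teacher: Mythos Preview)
Your proposal is correct and is exactly the direct verification the paper has in mind; the paper itself declares the lemma ``immediate'' and gives no further argument. Your observation that objects of $\E\tensor\F$ are determined by their endpoint pair $(c,a)$, so that uniqueness of the (op)cartesian lifts and the factorisation condition are automatic once existence is shown, is precisely why the result is immediate.
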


To summarize, we have

  \medskip\noindent\textbf{Proposition. } Let $S,R$ be monotone
  relations with associated fibrations $\E,\F$. Then the relation
  associated with $\E\tensor\F$ is $S\cdot R$, that is, we can write
  $\E^{S\cdot R}=\E^S\tensor\E^R$.  

\section{Exact squares}
\label{sec:exact_squares}
The notion of {\em exact squares\/} replaces
the notion of weak pullbacks in the preorder
setting and exact squares will play a central r\^{o}le in our
extension theorem. Exact squares were introduced and studied
by Ren\'{e} Guitart in~\cite{guitart}.

\begin{definition}
  A lax square in $\Pre$
\begin{equation}
\label{ex:lax_square}
\vcenter{
\xymatrix{
\P
\ar[0,1]^-{p_1}
\ar[1,0]_{p_0}
\ar @{} [1,1]|{\nearrow}
&
\B
\ar[1,0]^{g}
\\
\A
\ar[0,1]_-{f}
&
\C
}
}
\end{equation}
is {\em exact\/} iff the canonical comparison in $\Rel{\Pre}$ below is
an iso (identity).
\begin{equation}
\vcenter{
\xymatrix{
\P
\ar[1,0]|-{\object @{/}}_{(p_0)_\diamond}
\ar @{} [1,1]|{\searrow}
&
\B
\ar[0,-1]|-{\object @{/}}_-{(p_1)^\diamond}
\ar[1,0]|-{\object @{/}}^{g_\diamond}
\\
\A
&
\C
\ar[0,-1]|-{\object @{/}}^-{f^\diamond}
}
}
\end{equation}
\end{definition}

\begin{remark}
In defining the canonical comparison, we use the adjunctions
$(p_1)_\diamond\dashv (p_1)^\diamond$ and $f_\diamond\dashv
f^\diamond$ guaranteed by Lemma~\ref{lem:diamond_adjunction}.

Using the formula~\refeq{eq:relation_composition}
we obtain an equivalent
criterion for exactness
namely that
\begin{equation}
\label{eq:exactness_coend}
\C(fa,gb)
=
\bigvee_w \A(a,p_0 w)\wedge\B(p_1 w,b)
\end{equation}
\end{remark}

\begin{example}
\label{ex:guitart}
We give examples of exact squares in $\Pre$. They all
come from Guitart's paper~\cite{guitart}, Example~1.14.
The proofs follow immediately from the
description~\refeq{eq:exactness_coend} above.
\begin{enumerate}
\item
The square
$$
\xymatrix{
\A
\ar[0,1]^-{f}
\ar[1,0]_{1_\A}
\ar @{} [1,1]|{\nearrow}
&
\B
\ar[1,0]^{1_\B}
\\
\A
\ar[0,1]_-{f}
&
\B
}
$$
where the comparison is identity, is always exact since
$$
\B(fa,b)
=
\bigvee_w \A(a,w)\wedge\B(fw,b)
$$
holds by the Yoneda Lemma. Such a square is called
a {\em Yoneda square\/} in~\cite{guitart}.
\item
The square
$$
\xymatrix{
\A
\ar[0,1]^-{1_\A}
\ar[1,0]_{f}
\ar @{} [1,1]|{\nearrow}
&
\A
\ar[1,0]^{f}
\\
\B
\ar[0,1]_-{1_\B}
&
\B
}
$$
where the comparison is identity, is always exact since
$$
\B(b,fa)
=
\bigvee_w \B(b,fw)\wedge\A(w,a)
$$
holds by the Yoneda Lemma. Again, squares
of this form are called
{\em Yoneda squares\/} in~\cite{guitart}.
\item
Every {\em comma square\/}
$$
\xymatrix{
f/g
\ar[0,1]^-{d_1}
\ar[1,0]_{d_0}
\ar @{} [1,1]|{\nearrow}
&
\B
\ar[1,0]^{g}
\\
\A
\ar[0,1]_-{f}
&
\C
}
$$
is exact.
\item
Every {\em op-comma square\/}
$$
\xymatrix{
\C
\ar[0,1]^-{g}
\ar[1,0]_{f}
\ar @{} [1,1]|{\nearrow}
&
\B
\ar[1,0]^{i_1}
\\
\A
\ar[0,1]_-{i_0}
&
f\triangleright g
}
$$
is exact.
\item
The square
$$
\xymatrix{
\A
\ar[0,1]^-{1_\A}
\ar[1,0]_{1_\A}
\ar @{} [1,1]|{\nearrow}
&
\A
\ar[1,0]^{f}
\\
\A
\ar[0,1]_-{f}
&
\B
}
$$
(where the comparison is identity) is exact iff
$f$ is an {\em order-embedding\/}, i.e., iff the following
holds: $fa\leq fa'$ iff $a\leq a'$.

Such $f$'s can also be called {\em fully faithful\/}.
\item
\label{item:abs_dense}
The square
$$
\xymatrix{
\A
\ar[0,1]^-{e}
\ar[1,0]_{e}
\ar @{} [1,1]|{\nearrow}
&
\B
\ar[1,0]^{1_\B}
\\
\B
\ar[0,1]_-{1_\B}
&
\B
}
$$
(where the comparison is identity) is exact iff
$e$ is {\em absolutely dense\/}, i.e., iff 
$$
\B(b,b')
=
\bigvee_a \B(b,ea)\wedge\B(ea,b').
$$
See, e.g., \cite{absv}
and~\cite{bv} for more details on absolutely dense
maps.
\item\label{item:adjunction}
The square
$$
\xymatrix{
\X
\ar[0,1]^-{f}
\ar[1,0]_{1_\X}
\ar @{} [1,1]|{\nearrow}
&
\A
\ar[1,0]^{u}
\\
\X
\ar[0,1]_-{1_\X}
&
\X
}
$$
is exact iff $f\dashv u:\A\to\X$ holds. Moreover, the
comparison in the above square is the unit of $f\dashv u$.
\item
The square
$$
\xymatrix{
\A
\ar[0,1]^-{1_\A}
\ar[1,0]_{u}
\ar @{} [1,1]|{\nearrow}
&
\A
\ar[1,0]^{1_\A}
\\
\X
\ar[0,1]_-{f}
&
\A
}
$$
is exact iff $f\dashv u:\A\to\X$ holds. Moreover, the
comparison in the above square is the counit of $f\dashv u$.
\item
The square
$$
\xymatrix{
\X'
\ar[0,1]^-{f}
\ar[1,0]_{1_{\X'}}
\ar @{} [1,1]|{\nearrow}
&
\A
\ar[1,0]^{u}
\\
\X'
\ar[0,1]_-{j}
&
\X
}
$$
is exact iff $f\dashv_j u:\A\to\X$ holds, i.e., iff
$f$ is a left adjoint of $u$ {\em relative to\/} $j$.

In general, relative adjointness means the existence of an isomorphism
$$
\X(jx',ua)
\cong
\A(fx',a)
$$
natural in $x'$ and $a$, and due to
$$
\A(fx',a)
\cong
\bigvee_{w}\X'(w,x')\wedge\A(fw,a)
$$
this means precisely the exactness of the above square.
\item
The square
$$
\xymatrix{
\A
\ar[0,1]^-{j}
\ar[1,0]_{h}
\ar @{} [1,1]|{\nearrow}
&
\B
\ar[1,0]^{l}
\\
\X
\ar[0,1]_-{1_\X}
&
\X
}
$$
is exact iff the comparison exhibits $l$ as
an {\em absolute\/} left Kan extension of $h$
along $j$. In fact, 
$$
\X(x,lb)
=
\bigvee_a \X(x,ha)\wedge\B(ja,b)
$$
asserts precisely that
\begin{enumerate}
\item
$l$ is a left Kan extension of $h$ along $j$.

For any $k:\B\to\X$ we need to prove $l\to k$ iff $h\to k\o j$.
\begin{enumerate}
\item Suppose $lb\leq kb$ for all $b$. Choose any $a$. Then
$ha\leq lja$ by the square above. Since $lja\leq kja$ by
assumption, hence $ha\leq kja$. \item Suppose $ha\leq kja$ for all
$a$. To prove $lb\leq kb$ for all $b$, it suffices to prove that
$x\leq lb$ implies $x\leq kb$, for all $x$. Suppose $x\leq lb$,
i.e., $\X(x,lb)=1$. Hence $\bigvee_a\X(x,ha)\wedge\B(ja,b)=1$.
Choose $a$ to witness $x\leq ha$ and $ja\leq b$. From our
assumption we obtain $x\leq kja$, hence $x\leq kb$.
\end{enumerate}
\item
$l$ is an absolute left Kan extension of $h$ along $j$.

We need to prove that for any $f:\X\to\X'$, $f\o l$ is a left
Kan extension of $f\o h$ along $j$. That is,
for any $k:\B\to\X'$ we need to prove $f\o l\to k$ iff $f\o h\to k\o j$.

This is proved in the same manner as above.
\end{enumerate}
Observe that item~\ref{item:adjunction} above is a special
case of absolute Kan extensions by B\'{e}nabou's Theorem:
$f\dashv u$ holds if the unit exhibits $u$ as an absolute
left Kan extension of identity along $f$.
\end{enumerate}
\end{example}

\begin{example}
\label{ex:square+adjoints}
Every square~\refeq{ex:lax_square} where $f$ and $p_1$ are
{\em left\/} adjoints, is exact
iff $p_0\o p_1^r
=
f^r\o g$, where we
denote by $f^r$ and $p_1^r$ the respective right adjoints.

This is proved as follows. Firstly, the comparison $f\o p_0\to g\o
p_1$ is equivalent to the comparison $p_0\o p_1^r\to f^r\o g$ due to
adjunctions $f\dashv f^r$ and $p_1\dashv p_1^r$.
Further, we have 
$$
\bigvee_w \A(a,p_0 w)\wedge \B(p_1 w,b) 
=
\bigvee_w \A(a,p_0
w)\wedge \P(w,p_1^r b) 
=
\A(a,p_0 p_1^r b)
$$
and
$$
\C(fa,gb)
=
\A(a,f^r gb)
$$
It follows that the square~\refeq{ex:lax_square} is exact
iff 
$$
\A(a,p_0 p_1^r b)
=
\A(a,f^r gb).
$$
By the Yoneda Lemma, this is equivalent
to 
$p_0\o p_1^r
=
f^r\o g$.
\end{example}

\begin{example}
\label{ex:dual} If the square on the left is exact, then so is the
square on the right:
$$
\xymatrix{
\P
\ar[0,1]^-{p_1}
\ar[1,0]_{p_0}
\ar @{} [1,1]|{\nearrow}
&
\B
\ar[1,0]^{g}
\\
\A
\ar[0,1]_-{f}
&
\C
}
\quad\quad
\quad\quad
\quad\quad
\xymatrix{
\P^\op
\ar[0,1]^-{p_0^\op}
\ar[1,0]_{p_1^\op}
\ar @{} [1,1]|{\nearrow}
&
\A^\op
\ar[1,0]^{f^\op}
\\
\B^\op
\ar[0,1]_-{g^\op}
&
\C^\op
}
$$
\end{example}

\noindent To prove the claim, by~\refeq{eq:exactness_coend}, we need
$$
\C^\op (g^\op b,f^\op a)
=
\bigvee_w \B^\op (b,p_1^\op w)\wedge\A^\op(p_0^\op w,a)
$$
But
$$
\C^\op (g^\op b,f^\op a)
=
\C(fa,gb)
$$
and
$$
\bigvee_w \B^\op (b,p_1^\op w)\wedge\A^\op(p_0^\op w,a)
=
\bigvee_w \A(a,p_0 w)\wedge\B(p_1 w,b)
$$
and this finishes the proof.

\begin{lemma}
\label{lem:BCC}
Suppose that $(d_0^S,\E^S,d_1^S)$ and $(d_0^R,\E^R,d_1^R)$
are two-sided discrete fibrations.
Then the pullback
$$
\xymatrix{ \E^S\circ \E^R \ar[0,1]^-{q_1} \ar[1,0]_{q_0} & \E^R
\ar[1,0]^{d^R_0}
\\
\E^S
\ar[0,1]_-{d^S_1}
&
\B
}
$$
considered as a lax commutative square
where the comparison is identity, is exact.
\end{lemma}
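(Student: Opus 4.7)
The plan is to verify exactness directly by unfolding the coend criterion~\refeq{eq:exactness_coend}. For $e\in\E^S$ and $y\in\E^R$ we must show
$$
\B(d_1^S e,\, d_0^R y) \;=\; \bigvee_{w\in\E^S\circ\E^R} \E^S(e,\, q_0 w)\,\wedge\,\E^R(q_1 w,\, y),
$$
with $w$ ranging over pairs $(e',y')\in\E^S\times\E^R$ satisfying the pullback constraint $d_1^S e'=d_0^R y'$.

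The inequality $\geq$ is immediate and does not use the fibration hypothesis: for any candidate $w=(e',y')$ with $e\leq e'$ in $\E^S$ and $y'\leq y$ in $\E^R$, monotonicity of $d_1^S$ and $d_0^R$ together with the pullback equality yields $d_1^S e\leq d_1^S e'=d_0^R y'\leq d_0^R y$.

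The inequality $\leq$ is where Definition~\ref{def:fibrations} is doing its work. I would apply case~(2) to the fibration $\E^S$: given the assumption $d_1^S e\leq d_0^R y$ in $\B$, the opcartesian lift produces a unique $\bar e:=(d_1^S)_*(e)\in\E^S$ with $d_1^S\bar e=d_0^R y$ and $e\leq\bar e$. Setting $w:=(\bar e,y)$ then gives an element of the pullback $\E^S\circ\E^R$ by construction, and it witnesses the right-hand supremum via $q_0 w=\bar e\geq e$ and $q_1 w=y\leq y$. A symmetric alternative would use case~(1) applied to $\E^R$, producing $\bar y\in\E^R$ with $d_0^R\bar y=d_1^S e$ and $\bar y\leq y$, and then taking $w:=(e,\bar y)$.

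I expect no real obstacle; the entire content of the lemma is that the lifting property packaged into a two-sided discrete fibration is exactly what is needed to promote an inequality in the apex $\B$ to a genuine witness in the pullback. Read this way, Lemma~\ref{lem:BCC} is the fibration-theoretic face of the Beck-Chevalley condition for pullbacks over $\B$.
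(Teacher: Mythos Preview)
Your argument is correct and matches the paper's proof essentially line for line: the paper only writes out the nontrivial direction, producing the opcartesian lift $e'=(d_1^S)_*(e)$ along $d_1^S e\le d_0^R f$ and taking $w=(e',f)$, which is precisely your witness $(\bar e,y)$. Your additional remarks on the trivial direction and the symmetric alternative via case~(1) of Definition~\ref{def:fibrations} are fine but not needed.
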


\begin{proof}
Suppose that $d^S_1(e)\leq d^R_0(f)$ holds. Then we have a
situation
$$
\xymatrix{
c
\ar @{~>} [0,1]^-{e}
&
b\leq b'
\ar @{~>} [0,1]^-{f}
&
a
}
$$
and there exists $w$ in $\E^S\circ\E^R$ of the form
$$
\xymatrix{
c
\ar @{~>} [0,1]^-{e'}
&
b'
\ar @{~>} [0,1]^-{f}
&
a
}
$$
that clearly satisfies
$e\leq p_0(e',f)$ and $p_1(e',f)\leq f$.
\end{proof}

Given monotone relations $\xymatrix@1{\A\ar[0,1]|-{\object @{/}}^-{R}&\B}$ and $\xymatrix@1{\B\ar[0,1]|-{\object @{/}}^-{S}&\C}$, the two-sided fibration corresponding to the composition $S\cdot R$ is the composition of the fibrations corresponding to $S$ and $R$ as described in Section~\ref{sec:comp-fib}. The properties described in the next Corollary are essential for the proof of Theorem~\ref{th:universal_property}.

\begin{corollary}
\label{cor:composition}
Form, for a pair $R$, $S$, of monotone relations
the following commutative diagram
$$
\xymatrix@R=18pt{
&
&
\E^{S\o R}
\ar `l[llddd] [llddd]_{d^{S\o R}_0}
\ar `r[rrddd] [rrddd]^{d^{S\o R}_1}
&
&
\\
&
&
\E^S\circ\E^R
\ar[1,-1]_{q_0}
\ar[1,1]^{q_1}
\ar[-1,0]_{w}
&
&
\\
&
\E^S
\ar[1,-1]_{d^S_0}
\ar[1,1]^{d^S_1}
\ar @{} [0,2]|-{\to}
&
&
\E^R
\ar[1,-1]_{d^R_0}
\ar[1,1]^{d^R_1}
&
\\
\C
&
&
\B
&
&
\A
}
$$
where the lax commutative square in the middle is
a pullback square (hence the comparison is the identity),
and $w$ is a map, surjective on objects, coming from
composing $\E^S$ and $\E^R$ as fibrations.
Then the square is exact and $w$ is an absolutely dense monotone
map. 
\end{corollary}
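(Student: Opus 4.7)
The plan is to observe that both assertions in the corollary are essentially immediate consequences of results already in place; the corollary's role is really just to assemble them for use in Theorem~\ref{th:universal_property}.

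For the exactness of the central pullback square, I would invoke Lemma~\ref{lem:BCC} verbatim. By hypothesis, both $(d^S_0,\E^S,d^S_1)$ and $(d^R_0,\E^R,d^R_1)$ are two-sided discrete fibrations, and the middle square of the displayed diagram is precisely the pullback of these two fibrations along their common codomain $\B$, viewed as a lax commutative square whose comparison 2-cell is the identity. That is exactly the configuration whose exactness is asserted in Lemma~\ref{lem:BCC}, so nothing further is needed.

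For the absolute density of $w$, I would appeal to the construction of the tensor $\E^S \tensor \E^R$ recalled in Section~\ref{sec:comp-fib}. By design, every object of $\E^S \tensor \E^R$ is a wiggly arrow $c \rightsquigarrow a$ that arises, by definition of the vertex, from some object of the pullback $\E^S \circ \E^R$; hence $w$ is surjective on objects. Since, by the preceding paragraph of the text, $\E^{S\o R}$ is constructed as precisely this tensor, Remark~\ref{rem:absolutely_dense} applies and yields that $w$ is absolutely dense (surjective on objects implies $w_\diamond \cdot w^\diamond = \id$).

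There is no genuine obstacle: both statements were essentially proved earlier, one in Section~\ref{sec:exact_squares} (Lemma~\ref{lem:BCC}) and the other in Section~\ref{sec:relations} (the construction of composite fibrations together with Remark~\ref{rem:absolutely_dense}). The only thing to be careful about is matching conventions, namely that the identification $\E^{S\o R} = \E^S \tensor \E^R$ is in force, and that the comparison 2-cell of the pullback square is taken to be the identity so that Lemma~\ref{lem:BCC} applies directly rather than up to an isomorphism.
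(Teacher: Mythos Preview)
Your proposal is correct and matches the paper's intended argument: the corollary is stated without proof precisely because it is meant to follow immediately from Lemma~\ref{lem:BCC} (exactness of the pullback of two fibrations) together with the construction in Section~\ref{sec:comp-fib}, where $w$ is already observed to be surjective on objects and hence absolutely dense by Remark~\ref{rem:absolutely_dense}, and the identification $\E^{S\cdot R}=\E^S\tensor\E^R$ is supplied by the Proposition at the end of that section.
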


In the extension theorem we will demand that a certain functor $T:\Pre\to\Pre$
preserves exact squares, whereas the proof of the theorem actually only
needs the at first sight weaker requirement that $T$ preserves strict
exact squares and preserves the exactness of comma squares of the
form $1_\A/1_\A$ (the former being needed for preservation of
composition and the latter for preservation of identities). It
therefore seems of interest to present the following result.

\medskip\noindent\emphh{Proposition. }
For a locally monotone $T:\Pre\to\Pre$, or $T:\Pos\to\Pos$, the following 
are equivalent:
\begin{enumerate}
\item 
$T$ preserves {\em lax\/} exact squares.

\item\label{item:crucial}
$T$ preserves strict exact squares and exactness
of comma squares of the form $1_\A/1_\A$, 
for all $\A$.
\item
$T$ preserves strict exact squares and exactness
of comma squares of the form $f/1_\B$, $1_\A/f$,
for all $f:\A\to\B$.
\item
$T$ preserves strict exact squares and 
exactness of comma squares.
\end{enumerate}

\begin{proof}
(1) implies (2): clear.

\medskip\noindent
(2) implies (3): Suppose $f:\A\to\B$ is a monotone
map. We prove that $T$ preserves exactness
of the comma square
$$
\xymatrix{
&
f/1_\B
\ar[1,-1]_{\pi_0}
\ar[1,1]^{\pi_1}
&
\\
\A
\ar[1,1]_{f}
\ar@{}[0,2]|-{\to}
&
&
\B
\ar[1,-1]^{1_\B}
\\
&
\B
&
}
$$
That $T$ preserves exactness of comma squares
of the form $1_\A/f$ is proved analogously. 

Define $e:\P\to f/1_\B$
by the universal property in
\begin{equation}
\label{eq:e0}
\vcenter{
\xymatrix{
&
\P
\ar[1,0]^{e}
&
\\
&
f/1_\B
\ar[1,-1]_{\pi_0}
\ar[1,1]^{\pi_1}
&
\\
\A
\ar[1,1]_{f}
\ar@{}[0,2]|-{\to}
&
&
\B
\ar[1,-1]^{1_\B}
\\
&
\B
&
}
}
\quad
=
\quad
\vcenter{
\xymatrix{
&
&
\P
\ar[1,-1]_{s'_1}
\ar[1,1]^{s'_0}
&
&
\\
&
\A
\ar[1,-1]_{1_\A}
\ar[1,1]^{f}
\ar@{}[0,2]^-{(i)}
&
&
1_\B/1_\B
\ar[1,-1]_{s_1}
\ar[1,1]^{p'_1}
&
\\
\A
\ar[1,1]_{f}
\ar@{}[0,2]^-{(ii)}
&
&
\B
\ar[1,-1]_{1_\B}
\ar[1,1]^{1_\B}
\ar@{}[0,2]^-{(iii)}
\ar@{}[0,2]|-{\to}
&
&
\B
\ar[1,-1]^{1_\B}
\\
&
\B
\ar[1,1]_{1_\B}
\ar@{}[0,2]^-{(iv)}
&
&
\B
\ar[1,-1]^{1_\B}
&
\\
&
&
\B
&
&
}
}
\end{equation}
where $(i)$, $(ii)$, $(iv)$ are pullbacks, and $(iii)$
is a comma square.

Clearly, $e:\P\to f/1_\B$ maps $(a,b',b)$ in $\P$ to $(a,b)$
in $f/1_\B$ and $e$ is a monotone surjection.

The image under $T$ of the diagram on the right of~\refeq{eq:e0}
is exact by assumptions. Hence the image under $T$ of the
diagram on the left of~\refeq{eq:e0} is exact. Since $e$
is a surjection, $e_\diamond\o e^\diamond=1_{f/1_{\B}}$.
Hence $(Te)_\diamond\o (Te)^\diamond=1_{T(f/g)}$ holds
since $T$ preserves surjections (express surjectivity 
as a strict exact square). Thus 
\begin{eqnarray*}
(T\pi_0)_\diamond\o (T\pi_1)^\diamond
&=&
(T\pi_0)_\diamond\o (Te)_\diamond\o (Te)^\diamond \o (T\pi_1)^\diamond
\\
&=&
(Tf)^\diamond \o (Tg)_\diamond
\end{eqnarray*}
proving exactness of
$$
\xymatrix{
&
T(f/1_\B)
\ar[1,-1]_{T\pi_0}
\ar[1,1]^{T\pi_1}
&
\\
T\A
\ar[1,1]_{Tf}
\ar@{}[0,2]|-{\to}
&
&
T\B
\ar[1,-1]^{1_{T\B}}
\\
&
T\B
&
}
$$

\medskip\noindent
(3) implies (4): Suppose
$$
\xymatrix{
&
f/g
\ar[1,-1]_{\pi_0}
\ar[1,1]^{\pi_1}
&
\\
\A
\ar[1,1]_{f}
\ar@{}[0,2]|-{\to}
&
&
\B
\ar[1,-1]^{g}
\\
&
\C
&
}
$$
is a comma square and define $e:\P\to f/g$
by the universal property in
\begin{equation}
\label{eq:e1}
\vcenter{
\xymatrix{
&
\P
\ar[1,0]^{e}
&
\\
&
f/g
\ar[1,-1]_{\pi_0}
\ar[1,1]^{\pi_1}
&
\\
\A
\ar[1,1]_{f}
\ar@{}[0,2]|-{\to}
&
&
\B
\ar[1,-1]^{g}
\\
&
\C
&
}
}
\quad
=
\quad
\vcenter{
\xymatrix{
&
&
\P
\ar[1,-1]_{s'_1}
\ar[1,1]^{s'_0}
&
&
\\
&
f/1_\C
\ar[1,-1]_{p'_0}
\ar[1,1]^{s_0}
\ar@{}[0,2]^-{(i)}
&
&
1_\C/g
\ar[1,-1]_{s_1}
\ar[1,1]^{p'_1}
&
\\
\A
\ar[1,1]_{f}
\ar@{}[0,2]^-{(ii)}
\ar@{}[0,2]|-{\to}
&
&
\C
\ar[1,-1]_{1_\C}
\ar[1,1]^{1_\C}
\ar@{}[0,2]^-{(iii)}
\ar@{}[0,2]|-{\to}
&
&
\B
\ar[1,-1]^{g}
\\
&
\C
\ar[1,1]_{1_\C}
\ar@{}[0,2]^-{(iv)}
&
&
\C
\ar[1,-1]^{1_\C}
&
\\
&
&
\C
&
&
}
}
\end{equation}
where $(i)$ and $(iv)$ are pullbacks, $(ii)$ and $(iii)$
are comma squares.

Clearly, $e:\P\to f/g$ maps $(a,c,b)$ in $\P$ to $(a,b)$
in $f/g$ and $e$ is a monotone surjection.

The image under $T$ of the diagram on the right of~\refeq{eq:e1}
is exact by assumptions. Hence the image under $T$ of the
diagram on the left of~\refeq{eq:e1} is exact. Since $e$
is a surjection, $e_\diamond\o e^\diamond=1_{f/g}$.
Hence $(Te)_\diamond\o (Te)^\diamond=1_{T(f/g)}$ holds
since $T$ preserves surjections (express surjectivity 
as a strict exact square). Thus 
\begin{eqnarray*}
(T\pi_0)_\diamond\o (T\pi_1)^\diamond
&=&
(T\pi_0)_\diamond\o (Te)_\diamond\o (Te)^\diamond \o (T\pi_1)^\diamond
\\
&=&
(Tf)^\diamond \o (Tg)_\diamond
\end{eqnarray*}
proving exactness of  
$$
\xymatrix{
&
T(f/g)
\ar[1,-1]_{T\pi_0}
\ar[1,1]^{T\pi_1}
&
\\
T\A
\ar[1,1]_{Tf}
\ar@{}[0,2]|-{\to}
&
&
T\B
\ar[1,-1]^{Tg}
\\
&
T\C
&
}
$$

\medskip\noindent
(4) implies (1): 
Suppose that the lax square
$$
\xymatrix{
\P
\ar[0,1]^-{p_1}
\ar[1,0]_{p_0}
&
\B
\ar[1,0]^-{g}
\\
\A
\ar[0,1]_-{f}
\ar@{}[-1,1]|{\nearrow}
&
\C
}
$$
is exact. 

Observe that there is an equality
\begin{equation}
\label{eq:e2}
\vcenter{
\xymatrix{
&
\kat{S}
\ar[1,0]^{e}
&
\\
&
f/g
\ar[1,-1]_{\pi_0}
\ar[1,1]^{\pi_1}
&
\\
\A
\ar[1,1]_{f}
\ar@{}[0,2]|-{\to}
&
&
\B
\ar[1,-1]^{g}
\\
&
\C
&
}
}
\quad
=
\quad
\vcenter{
\xymatrix{
&
&
\kat{S}
\ar[1,-1]_{s'_1}
\ar[1,1]^{s'_0}
&
&
\\
&
1_\A/p_0
\ar[1,-1]_{p'_0}
\ar[1,1]^{s_0}
\ar@{}[0,2]^-{(i)}
&
&
p_1/1_\B
\ar[1,-1]_{s_1}
\ar[1,1]^{p'_1}
&
\\
\A
\ar[1,1]_{1_\A}
\ar@{}[0,2]^-{(ii)}
\ar@{}[0,2]|-{\to}
&
&
\P
\ar[1,-1]_{p_0}
\ar[1,1]^{p_1}
\ar@{}[0,2]^-{(iii)}
\ar@{}[0,2]|-{\to}
&
&
\B
\ar[1,-1]^{1_\B}
\\
&
\A
\ar[1,1]_{f}
\ar@{}[0,2]^-{(iv)}
\ar@{}[0,2]|-{\to}
&
&
\B
\ar[1,-1]^{g}
&
\\
&
&
\C
&
&
}
}
\end{equation}
where the diagrams on the right are: $(i)$ is a pullback,
$(ii)$ and $(iii)$ are comma objects, and $(iv)$ is the
original lax exact square. On the left, the morphism 
$e:\kat{S}\to f/g$ is induced by the universal property 
of comma squares. Observe that $e$ is a monotone 
surjection: $e$ maps $(a,w,b)$ in $\kat{S}$ to $(a,b)$
in $f/g$, and for $(a,b)$ in $f/g$ there is $(a,w,b)$
in $\kat{S}$ by exactness.

Therefore, the diagram
$$
\vcenter{
\xymatrix{
&
\kat{S}
\ar[1,0]^{e}
&
\\
&
f/g
\ar[1,-1]_{\pi_0}
\ar[1,1]^{\pi_1}
&
\\
\A
\ar[1,1]_{f}
\ar@{}[0,2]|-{\to}
&
&
\B
\ar[1,-1]^{g}
\\
&
\C
&
}
}
=
\vcenter{
\xymatrix{
&
\kat{S}
\ar[1,-1]_{\pi_0\o e}
\ar[1,1]^{\pi_1\o e}
&
\\
\A
\ar[1,1]_{f}
\ar@{}[0,2]|-{\to}
&
&
\B
\ar[1,-1]^{g}
\\
&
\C
&
}
}
$$ 
is exact, i.e., the equality
$$
(\pi_0)_\diamond \o e_\diamond \o e^\diamond \o (\pi_1)^\diamond
=
f^\diamond \o g_\diamond
$$
holds. This follows from $ e_\diamond \o e^\diamond = 1_{f/g}$,
since $e$ is surjective and from the fact that comma squares
are exact.

By assumption, in the diagram
$$
\xymatrix{
&
&
T(f/g)
\ar[1,-1]_{Ts'_1}
\ar[1,1]^{Ts'_0}
&
&
\\
&
T(1_\A/p_0)
\ar[1,-1]_{Tp'_0}
\ar[1,1]^{Ts_0}
\ar@{}[0,2]^-{T(i)}
&
&
T(p_1/1_\B)
\ar[1,-1]_{Ts_1}
\ar[1,1]^{Tp'_1}
&
\\
T\A
\ar[1,1]_{1_{T\A}}
\ar@{}[0,2]^-{T(ii)}
\ar@{}[0,2]|-{\to}
&
&
T\P
\ar[1,-1]_{Tp_0}
\ar[1,1]^{Tp_1}
\ar@{}[0,2]^-{T(iii)}
\ar@{}[0,2]|-{\to}
&
&
T\B
\ar[1,-1]^{1_{T\B}}
\\
&
T\A
\ar[1,1]_{Tf}
\ar@{}[0,2]^-{T(iv)}
\ar@{}[0,2]|-{\to}
&
&
T\B
\ar[1,-1]^{Tg}
&
\\
&
&
T\C
&
&
}
$$
the square $T(i)$ is strict exact, and $T(ii)$, 
$T(iii)$ are lax exact squares.
Also, the whole diagram is exact, being the image of the diagram
$$
\xymatrix{
&
\kat{S}
\ar[1,-1]_{e}
\ar[1,1]^{e}
&
\\
f/g
\ar[1,1]_{1_{f/g}}
&
&
f/g
\ar[1,-1]^{1_{f/g}}
\\
&
f/g
\ar[1,-1]_{\pi_0}
\ar[1,1]^{\pi_1}
&
\\
\A
\ar[1,1]_{f}
\ar@{}[0,2]|-{\to}
&
&
\B
\ar[1,-1]^{g}
\\
&
\C
&
}
$$
under $T$ (use assumptions: the upper square is strict exact, 
and the lower square is a comma object).

We prove that $T(iv)$ is exact.
Indeed:
\begin{eqnarray*}
(Tf)^\diamond\o (Tg)_\diamond
&=&
(1_{T\A})^\diamond \o (Tf)^\diamond\o (Tg)_\diamond\o (1_{T\B})_\diamond
\\
&=&
(Tp'_0)_\diamond \o (Ts'_1)_\diamond \o (Ts'_0)^\diamond \o (Tp'_1)^\diamond
\\
&=&
(Tp'_0)_\diamond \o (Ts_0)^\diamond \o (Ts_1)_\diamond \o (Tp'_1)^\diamond
\\
&=&
(1_{T\A})^\diamond \o (Tp_0)_\diamond \o (Tp_1)^\diamond \o (1_{T\B})_\diamond
\\
&=&
(Tp_0)_\diamond \o (Tp_1)^\diamond
\end{eqnarray*}
\end{proof}

\section{The universal property of $({-})_\diamond:\Pre\to\Rel{\Pre}$}
\label{sec:universal_property}

We prove now that the 2-functor $({-})_\diamond:\Pre\to\Rel{\Pre}$
has an analogous universal property to the case of sets. From
that, the result on a unique lifting of $T$ to $\ol{T}$ will
immediately follow, see Theorem~\ref{th:extension} below.

\begin{theorem}
\label{th:universal_property}
The 2-functor $({-})_\diamond:\Pre\to\Rel{\Pre}$
has the following three properties:
\begin{enumerate}
\item
Every $f_\diamond$ is a left adjoint.
\item
For every exact square~\refeq{ex:lax_square}
the equality
$f^\diamond\o g_\diamond = (p_0)_\diamond\o (p_1)^\diamond$
holds.
\item
For every absolutely dense monotone map $e$,
the relation $e_\diamond$ is a split epimorphism
with the splitting given by $e^\diamond$.
\end{enumerate}

Moreover, the functor $({-})_\diamond$
is universal w.r.t. these three properties
in the following sense: if $\KK$ is any 2-category
where the isomorphism 2-cells are identities, to give
a 2-functor $H:\Rel{\Pre}\to\KK$ is the same thing
as to give a 2-functor $F:\Pre\to\KK$ with the following three
properties:
\begin{enumerate}
\item
Every $Ff$ has a right adjoint, denoted by $(Ff)^r$.
\item
For every exact square~\refeq{ex:lax_square}
the equality
$Ff^r\o Fg = Fp_0\o (Fp_1)^r$
holds.
\item
For every absolutely dense monotone map $e$,
$Fe$ is a split epimorphism, with the splitting
given by $(Fe)^r$.
\end{enumerate}
\end{theorem}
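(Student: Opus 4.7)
The plan is to first verify the three listed properties of $({-})_\diamond$, and then establish the universal correspondence by showing any $F$ with the three analogous properties extends uniquely to a 2-functor $H:\Rel{\Pre}\to\KK$.

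Part~1 merely assembles earlier results: property~(1) is Lemma~\ref{lem:diamond_adjunction}; property~(2) is the definition of exactness together with the fact that the only isomorphism 2-cells in $\Rel{\Pre}$ are identities; property~(3) is the characterisation in Remark~\ref{rem:absolutely_dense}, which says exactly that $e$ is absolutely dense iff $e_\diamond\cdot e^\diamond=\id$. The direction $H\mapsto H\circ({-})_\diamond$ of the universal correspondence then automatically produces an $F$ with the three properties, since 2-functors preserve adjunctions, compositions, and identities.

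For the reverse direction, given $F:\Pre\to\KK$ with the three properties, I would first observe that $H$ is forced. Indeed, $H(f_\diamond)=F(f)$ is required by $H\circ({-})_\diamond=F$; since $f_\diamond\dashv f^\diamond$ and 2-functors preserve adjunctions, $H(f^\diamond)$ must be a right adjoint of $F(f)$, hence equal to $(F(f))^r$ (right adjoints are unique up to iso 2-cell, and iso 2-cells in $\KK$ are identities by assumption). Proposition~\ref{prop:profunctors=fibrations} writes every relation as $R=(d_0)_\diamond\cdot(d_1)^\diamond$ for its associated fibration, forcing
\[
H(R)=F(d_0)\circ(F(d_1))^r.
\]

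For existence I would take the displayed formula as definition and verify 2-functoriality. Identity preservation reduces to $F(p_0)\circ(F(p_1))^r=\id_{F\A}$ for the projections of the comma square $1_\A/1_\A$, which follows from property~(2) of $F$ applied to that exact comma square (using $(F1_\A)^r=\id$). Composition preservation is the central calculation and uses Corollary~\ref{cor:composition}: given $R,S$ with fibrations $\E^R,\E^S$, the composite fibration is the image of the pullback $\E^S\circ\E^R$ under an absolutely dense $w:\E^S\circ\E^R\to\E^{S\cdot R}$ commuting with projections, and the pullback itself is strictly exact by Lemma~\ref{lem:BCC}. Expanding
\[
H(S)\circ H(R)=F(d^S_0)\circ(F(d^S_1))^r\circ F(d^R_0)\circ(F(d^R_1))^r,
\]
I would rewrite the middle two factors $(F(d^S_1))^r\circ F(d^R_0)$ as $F(q_0)\circ(F(q_1))^r$ via property~(2) on the pullback, fold adjacent factors using $(GF)^r=F^rG^r$ into $F(d^S_0\cdot q_0)\circ(F(d^R_1\cdot q_1))^r$, rewrite $d^S_0\cdot q_0=d^{S\cdot R}_0\cdot w$ and similarly on the right, and collapse $F(w)\circ(F(w))^r=\id$ by property~(3) to obtain $H(S\cdot R)$. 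For 2-cells $R\le R'$, the monotone inclusion $\iota:\E^R\hookrightarrow\E^{R'}$ of fibration vertices factors the projections; the counit of $F(\iota)\dashv F(\iota)^r$ (which exists by property~(1)) whiskers to a 2-cell $H(R)\to H(R')$. Finally, $H(f_\diamond)=F(f)$ is seen by applying property~(2) to the exact comma square $1_\B/f$, whose projections constitute the fibration of $f_\diamond$ (Example~\ref{ex:f_diamond=span}), confirming $H\circ({-})_\diamond=F$.

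The main obstacle will be carrying out the composition calculation without notational slippage: it interleaves the strictly exact pullback (handled by property~(2)) with the absolutely dense quotient (handled by property~(3)), and demands careful bookkeeping for right adjoints of composites. A secondary point is checking the full 2-functoriality of $H$, in particular that the whiskered counits behave coherently under vertical composition of inclusions of fibration vertices.
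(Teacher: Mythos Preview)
Your proposal is correct and follows essentially the same approach as the paper: define $H(R)=F(d_0)\circ(F(d_1))^r$ via the fibration decomposition of $R$, verify identity preservation using property~(2) on the exact comma square $1_\A/1_\A$, and verify composition preservation via Corollary~\ref{cor:composition} by first applying property~(2) to the exact pullback and then property~(3) to collapse $F(w)\circ(F(w))^r$. You are in fact more thorough than the paper's own proof, which omits the uniqueness argument, the action on 2-cells, and the explicit verification of $H\circ({-})_\diamond=F$ via the comma square $1_\B/f$.
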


\begin{proof}
It is trivial to see that $({-})_\diamond$
has the above three properties.

Given a 2-functor $H:\Rel{\Pre}\to\KK$,
define $F$ to be the composite $H\o ({-})_\diamond$.
Such $F$ clearly has the above three properties,
since 2-functors preserve adjunctions.

Conversely, given $F:\Pre\to\KK$, define $H\A=F\A$
on objects, and
on a relation $R=(d_0^R)_\diamond\o (d_1^R)^\diamond$
define $H(R)=Fd_0^R\o (Fd_1^R)^r$, where
$(Fd_1^R)^r$ is the right adjoint of $Fd_1^R$ in $\KK$.

It is easy to verify that $H$ so defined preserves
identities: the identity relation $\id_\A$ on $\A$
is represented as a fibration
$$
\xymatrixcolsep{1.5pc}
\xymatrix{
&
1_\A/1_\A
\ar[1,-1]_{p_0}
\ar[1,1]^{p_1}
&
\\
\A
&
&
\A
}
$$
coming from the exact comma square
\begin{equation}\label{eq:exact-comma-square}
\xymatrix{
1_\A/1_\A
\ar[0,1]^-{p_1}
\ar[1,0]_{p_0}
\ar @{} [1,1]|{\nearrow}
&
\A
\ar[1,0]^{1_\A}
\\
\A
\ar[0,1]_-{1_\A}
&
\A
}
\end{equation}
Hence $H(\id_\A)=Fp_0\o (Fp_1)^r= F(1_\A)=1_{F\A}=1_{H\A}$
holds by our assumptions on $F$.

For preservation of composition use
Corollary~\ref{cor:composition}: first
$$
H(S)\o H(R)
=
Fd^S_0\o (Fd^S_1)^r
\o
Fd^R_0\o (Fd^R_1)^r
$$
by definition. Further, by exactness
of the pullback from Corollary~\ref{cor:composition}
and our assumption on $F$, we have
$$
Fd^S_0\o (Fd^S_1)^r
\o
Fd^R_0\o (Fd^R_1)^r
=
Fd^S_0\o Fq_0
\o
(Fq_1)^r\o (Fd^R_1)^r
$$
and, finally, since $Fw$ is split epi
by Corollary~\ref{cor:composition} and our assumption
on $F$, we obtain
$$
Fd^S_0\o Fq_0\o Fw
\o
(Fw)^r\o (Fq_1)^r\o (Fd^R_1)^r
=
Fd^{R\o S}_0\o (Fd^{R\o S}_1)^r
=
H(R\o S)
$$
and the proof is complete.
 \end{proof}

\noindent\emphh{Remark. } 
There is an analogous theorem with ``$\Pos$'' replacing ``$\Pre$'' and
``surjective'' replacing ``absolutely dense''.

\section{The extension theorem}
\label{sec:extension}

\begin{definition}
\label{def:BCC}
We say that a locally monotone functor
$T:\Pre\to\Pre$ satisfies the {\em Beck-Chevalley Condition\/} (BCC)
if it preserves exact squares.
\end{definition}

\begin{remark}\label{rmk:BCC}
  A functor satisfying the BCC 
  has to preserve order-embeddings, absolutely dense monotone maps and
  absolute left Kan extensions. This follows from Example~\ref{ex:guitart}.
Examples of functors (not) satisfying the
BCC 
can be found in Section~\ref{sec:examples}.
\end{remark}

\begin{theorem}
\label{th:extension}
For a 2-functor $T:\Pre\to\Pre$ the following
are equivalent:
\begin{enumerate}
\item
There is a 2-functor $\ol{T}:\Rel{\Pre}\to\Rel{\Pre}$
such that 
\begin{equation}
\label{eq:extension_square}
\vcenter{
\xymatrix{
\Rel{\Pre}
\ar[0,2]^-{\ol{T}}
&
&
\Rel{\Pre}
\\
\Pre
\ar[0,2]_-{T}
\ar[-1,0]^{({-})_\diamond}
&
&
\Pre
\ar[-1,0]_{({-})_\diamond}
}
}
\end{equation}
\item
The functor $T$ satisfies the BCC. 
\item
There is a distributive law
$
\
T\o\LL\to\LL\o T
\:$
of $T$ over the KZ doctrine $(\LL,\yon,\mult)$
described in~\refeq{eq:KZ} above.
\end{enumerate}
\end{theorem}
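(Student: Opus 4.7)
The plan is to establish $(2)\Leftrightarrow(1)$ by invoking the universal property of $({-})_\diamond$ from Theorem~\ref{th:universal_property} with target $\KK=\Rel{\Pre}$, and then to deduce $(1)\Leftrightarrow(3)$ from the standard correspondence between distributive laws of a functor over a monad and 2-functorial liftings to its Kleisli 2-category, using Proposition~\ref{prop:relations_are_kleisli}.

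For $(2)\Rightarrow(1)$, I consider the composite 2-functor $F=({-})_\diamond\circ T:\Pre\to\Rel{\Pre}$ and verify the three hypotheses of Theorem~\ref{th:universal_property}. Hypothesis~1 is immediate from Lemma~\ref{lem:diamond_adjunction}: $(Tf)_\diamond$ has right adjoint $(Tf)^\diamond$, so every $Ff$ has a right adjoint. Hypothesis~2 is a direct translation of BCC: given an exact square, its $T$-image is exact, and by uniqueness of right adjoints in $\Rel{\Pre}$ (Remark~\ref{rmk:diamond_adjunction}) one identifies $(Ff)^r$ with $(Tf)^\diamond$, so the required equality $Fp_0\o(Fp_1)^r=(Ff)^r\o Fg$ is precisely exactness of the $T$-image. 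Hypothesis~3 follows because, by Example~\ref{ex:guitart}(\ref{item:abs_dense}), absolute density of $e$ is exactness of a specific square, and BCC yields absolute density of $Te$, which by Remark~\ref{rem:absolutely_dense} says $(Te)_\diamond\o(Te)^\diamond=\id$, i.e.\ $Fe$ is split epi with splitting $(Fe)^r$. The universal property then produces a 2-functor $\ol{T}:\Rel{\Pre}\to\Rel{\Pre}$ with $\ol{T}\o({-})_\diamond=({-})_\diamond\o T$, which is precisely the commutativity~\refeq{eq:extension_square}.

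For $(1)\Rightarrow(2)$, I would start from an exact square as in~\refeq{ex:lax_square}, which by definition gives the equality $f^\diamond\o g_\diamond=(p_0)_\diamond\o(p_1)^\diamond$ in $\Rel{\Pre}$, and apply the 2-functor $\ol{T}$. Commutativity of~\refeq{eq:extension_square} gives $\ol{T}(h_\diamond)=(Th)_\diamond$ for every monotone $h$; since 2-functors preserve adjunctions and the lower diamond is the unique left adjoint up to the identification in Remark~\ref{rmk:diamond_adjunction}, we also obtain $\ol{T}(h^\diamond)=(Th)^\diamond$. Hence applying $\ol{T}$ to the exactness equality yields $(Tf)^\diamond\o(Tg)_\diamond=(Tp_0)_\diamond\o(Tp_1)^\diamond$, which is exactness of the $T$-image, establishing BCC.

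For $(1)\Leftrightarrow(3)$, Proposition~\ref{prop:relations_are_kleisli} identifies $\Rel{\Pre}$ with the Kleisli 2-category of $(\LL,\yon,\mult)$, so a 2-functor $\ol{T}$ making~\refeq{eq:extension_square} commute is exactly a lifting of $T$ through the Kleisli inclusion. In the 2-categorical, KZ-doctrine setting, such liftings correspond bijectively to distributive laws $T\o\LL\to\LL\o T$; this is the $\Pre$-enriched analogue of Beck's theorem and is worked out in~\cite{marmolejo+rosebrugh+wood}. This gives the remaining equivalence. The main obstacle I anticipate is the careful verification in $(2)\Rightarrow(1)$ that hypothesis~3 of Theorem~\ref{th:universal_property} truly follows from BCC, which relies on matching the abstract ``splitting of $Fe$ by $(Fe)^r$'' condition with the concrete characterisation of absolute density as the exactness of the square in Example~\ref{ex:guitart}(\ref{item:abs_dense}).
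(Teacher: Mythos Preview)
Your proposal is correct and follows essentially the same approach as the paper: the equivalence $(1)\Leftrightarrow(2)$ is obtained via the universal property of $({-})_\diamond$ in Theorem~\ref{th:universal_property} applied to $F=({-})_\diamond\circ T$, and $(1)\Leftrightarrow(3)$ via the standard Kleisli-lifting/distributive-law correspondence (the paper cites~\cite{street:monads} rather than~\cite{marmolejo+rosebrugh+wood}). The paper's own proof is a two-line sketch of exactly this argument; you have simply unpacked the verification that BCC is equivalent to the three hypotheses of Theorem~\ref{th:universal_property}, and your treatment of hypothesis~3 via Example~\ref{ex:guitart}(\ref{item:abs_dense}) is precisely the intended route.
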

\begin{proof}
  The equivalence of 1.~and 3.~follows from general facts about
  distributive laws, using
  Proposition~\ref{prop:relations_are_kleisli} above.  See, e.g.,
  \cite{street:monads}.
For the equivalence of 1.~and 2., observe that
$T$ satisfies the BCC 
iff
$$
\xymatrix@C=15pt{
\Pre
\ar[0,1]^-{T}
&
\Pre
\ar[0,1]^-{({-})_\diamond}
&
\Rel{\Pre}
}
$$
satisfies the three properties of
Theorem~\ref{th:universal_property} above.
\qed
\end{proof}

\noindent\emphh{Remark. } 
There is an analogous theorem with ``$\Pos$'' replacing ``$\Pre$''.

\begin{corollary}\label{cor:ext-thm}
  If $T$ is a locally monotone functor, the lifting
  $\ol{T}$ is computed as 
  $$\ol{T}(R)=(Td_0)_\diamond\cdot(Td_1)^\diamond$$ where $(d_0,\E,d_1)$
  is the two-sided discrete fibration corresponding to $R$.
\end{corollary}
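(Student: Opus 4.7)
The plan is to use that $\ol{T}$ is a 2-functor extending $T$ along $({-})_\diamond$, together with the factorisation of an arbitrary relation $R$ through its associated fibration provided by Proposition~\ref{prop:profunctors=fibrations}.

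First, given a monotone relation $\xymatrix@1{R:\B\ar[0,1]|-{\object @{/}}&\A}$ with associated fibration $(d_0,\E,d_1):\B\to\A$, Proposition~\ref{prop:profunctors=fibrations} rewrites $R$ as the composition
\[
R \;=\; (d_0)_\diamond \cdot (d_1)^\diamond
\]
inside $\Rel{\Pre}$. Applying the lifting $\ol{T}$ and using that it is a 2-functor (hence preserves composition of 1-cells), we obtain
\[
\ol{T}(R) \;=\; \ol{T}\bigl((d_0)_\diamond\bigr) \cdot \ol{T}\bigl((d_1)^\diamond\bigr).
\]

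Next, I would handle each factor separately. For $\ol{T}\bigl((d_0)_\diamond\bigr)$, the commutativity of the square~\refeq{eq:extension_square} in Theorem~\ref{th:extension} gives directly $\ol{T}\bigl((d_0)_\diamond\bigr)=(Td_0)_\diamond$. For the second factor, I would exploit the fact that $({-})^\diamond$ is characterised as the right adjoint of $({-})_\diamond$ in $\Rel{\Pre}$ (Lemma~\ref{lem:diamond_adjunction}, together with Remark~\ref{rmk:diamond_adjunction}). Since any 2-functor preserves adjunctions, applying $\ol{T}$ to $(d_1)_\diamond \dashv (d_1)^\diamond$ yields an adjunction $\ol{T}\bigl((d_1)_\diamond\bigr) \dashv \ol{T}\bigl((d_1)^\diamond\bigr)$, that is, $(Td_1)_\diamond \dashv \ol{T}\bigl((d_1)^\diamond\bigr)$. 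By uniqueness of right adjoints in $\Rel{\Pre}$ (which holds because the only isomorphism 2-cells are identities, as remarked after the definition of $\Rel{\Pre}$), this right adjoint must coincide with $(Td_1)^\diamond$.

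Combining the two identifications yields the claimed formula
\[
\ol{T}(R) \;=\; (Td_0)_\diamond \cdot (Td_1)^\diamond.
\]
The only subtle point is the uniqueness step for the right adjoint, but this is immediate from the rigidity of 2-cells in $\Rel{\Pre}$; everything else is just unpacking the extension theorem and the span/relation correspondence. No nontrivial obstacle is expected.
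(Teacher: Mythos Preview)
Your argument is correct. It differs slightly from the paper's implicit justification: in the proof of Theorem~\ref{th:universal_property} the extension $H$ (which becomes $\ol{T}$ when $F=({-})_\diamond\circ T$) is \emph{defined} by $H(R)=Fd_0^R\cdot(Fd_1^R)^r$, so the corollary is read off directly from that construction. You instead start from the abstract data of Theorem~\ref{th:extension}---that $\ol{T}$ is a 2-functor making the square~\refeq{eq:extension_square} commute---and recover the formula by factoring $R=(d_0)_\diamond\cdot(d_1)^\diamond$, using preservation of composition and of the adjunction $(d_1)_\diamond\dashv(d_1)^\diamond$, together with rigidity of 2-cells in $\Rel{\Pre}$. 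The paper's route is shorter because it simply unpacks the construction; your route has the mild advantage of showing that the formula is forced by the extension property alone, independently of how $\ol{T}$ was built, and hence that any 2-functor fitting into~\refeq{eq:extension_square} must act this way on relations.
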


\medskip\noindent\emphh{Corollary. } Let $T:\Pre\to\Pre$ and
  $T_0:\Set\to\Set$ such that $TD=DT_0$ and $VT=T_0V$ where
  $V:\Pre\to\Set$ is the forgetful functor and $D$ is its
  left-adjoint. Then $T$ satisfies the BCC iff $T_0$ preserves weak
  pullbacks.

\begin{proof}
  We show that $\ol{T}$ preserves composition of relations if
  $\ol{T_0}$ does. By Corollary~\ref{cor:ext-thm} and the corollary
  after Proposition~\ref{prop:profunctors=fibrations}, we have
  $\Rel{V}\ol{T}=\ol{T_0}\Rel{V}$. Let $S,R$ be two monotone
  relations. We have $\Rel{V}\ol{T}(S\cdot R)=\ol{T_0}\Rel{V}(S\cdot
  R)=\ol{T_0}(\Rel{V}S\cdot \Rel{V}R)=\ol{T_0}(\Rel{V}S)\cdot
  \ol{T_0}(\Rel{V}R)=\Rel{V}\ol{T}(S)\cdot
  \Rel{V}\ol{T}(R)=\Rel{V}(\ol{T}S\cdot \ol{T}R)$, hence
  $\ol{T}(S\cdot R)=\ol{T}S\cdot \ol{T}R$ by $\Rel{V}$ being faithful.

  Conversely, any pullback in $\Set$ is mapped by $D$ to a pullback in
  $\Pre$ and then to an exact square by $T$. Now from
  $TD=DT_0$ and the fact that any exact square of sets is a weak
  pullback it follows that $T_0$ preserves weak pullbacks.
\end{proof}

\section{Examples}
\label{sec:examples}

\begin{example}
\label{ex:Kripke-polynomial}
All the ``Kripke-polynomial'' 
functors
satisfy the Beck-Chevalley Condition. This means the
functors defined by the following grammar:
$$
T::=
\const_\X
\mid
\Id
\mid
T^\partial
\mid
T+T
\mid
T\times T
\mid
\LL T
$$
where $\const_\X$ is the constant-at-$\X$,
$T^\partial$ is the {\em dual\/} of $T$, defined
by putting
$$
T^\partial\A=(T\A^\op)^\op
$$
and $\LL\X=[\X^\op,\Two]$ (the lowersets on $\X$, ordered
by inclusion). Observe that $\LL^\partial\X=[\X,\Two]^\op$,
hence $\LL^\partial\X=\UU\X$ (the uppersets on $\X$,
ordered by reversed inclusion).

To check that BCC is satisfied, suppose that the square
\begin{equation}
\label{eq:exact_square}
\vcenter{
\xymatrix{
\P
\ar[0,1]^-{p_1}
\ar[1,0]_{p_0}
\ar @{} [1,1]|{\nearrow}
&
\B
\ar[1,0]^{g}
\\
\A
\ar[0,1]_-{f}
&
\C
}
}
\end{equation}
is exact.
\begin{enumerate}
\item
The functor $\const_\X$.

The image of square~\refeq{eq:exact_square} under
$\const_\X$ is the square
$$
\xymatrix{
\X
\ar[0,1]^-{1_\X}
\ar[1,0]_{1_\X}
\ar @{} [1,1]|{\nearrow}
&
\X
\ar[1,0]^{1_\X}
\\
\X
\ar[0,1]_-{1_\X}
&
\X
}
$$
where the comparison is the identity. This is an exact
square (it is a Yoneda square).
\item
The functor $\Id$.

This functor obviously satisfies the Beck-Chevalley Condition.
\item
Suppose $T$ satisfies the Beck-Chevalley Condition.

The square
$$
\xymatrix{
\P^\op
\ar[0,1]^-{p_0^\op}
\ar[1,0]_{p_1^\op}
\ar @{} [1,1]|{\nearrow}
&
\A^\op
\ar[1,0]^{f^\op}
\\
\B^\op
\ar[0,1]_-{g^\op}
&
\C^\op
}
$$
is exact by Example~\ref{ex:dual}
and, by assumption, so is the square
$$
\xymatrix{
T(\P^\op)
\ar[0,1]^-{T(p_0^\op)}
\ar[1,0]_{T(p_1^\op)}
\ar @{} [1,1]|{\nearrow}
&
T\A^\op
\ar[1,0]^{T(f^\op)}
\\
T(\B^\op)
\ar[0,1]_-{T(g^\op)}
&
T(\C^\op)
}
$$
Finally, the square
$$
\xymatrix{
(T(\P^\op))^\op
\ar[0,1]^-{(T(p_1^\op))^\op}
\ar[1,0]_{(T(p_0^\op))^\op}
\ar @{} [1,1]|{\nearrow}
&
(T(\B^\op))^\op
\ar[1,0]^{(T(g^\op))^\op}
\\
(T(\A^\op))^\op
\ar[0,1]_-{(T(f^\op))^\op}
&
(T(\C^\op))^\op
}
$$
is exact by Example~\ref{ex:dual}
and this is what we were supposed to prove.
\item
Suppose both $T_1$ and $T_2$ satisfy the Beck-Chevalley Condition.
We prove that $T_1+T_2$ does satisfy it.

The image of~\refeq{eq:exact_square} under $T_1+T_2$ is
$$
\xymatrixcolsep{4pc}
\xymatrix{
T_1\P+T_2\P
\ar[0,1]^-{T_1p_1+T_2p_1}
\ar[1,0]_{T_1p_0+T_2p_0}
\ar @{} [1,1]|{\nearrow}
&
T_1\B+T_2\B
\ar[1,0]^{T_1g+T_2g}
\\
T_1\A+T_2\A
\ar[0,1]_-{T_1f+T_2f}
&
T_1\C+T_2\C
}
$$
The assertion follows from the fact that coproducts
are disjoint in $\Pre$.
\item
Suppose both $T_1$ and $T_2$ satisfy the Beck-Chevalley Condition.
We prove that $T_1\times T_2$ does satisfy it.

The image of~\refeq{eq:exact_square} under $T_1\times T_2$ is
$$
\xymatrixcolsep{4pc}
\xymatrix{
T_1\P\times T_2\P
\ar[0,1]^-{T_1p_1\times T_2p_1}
\ar[1,0]_{T_1p_0\times T_2p_0}
\ar @{} [1,1]|{\nearrow}
&
T_1\B\times T_2\B
\ar[1,0]^{T_1g\times T_2g}
\\
T_1\A\times T_2\A
\ar[0,1]_-{T_1f\times T_2f}
&
T_1\C\times T_2\C
}
$$
The assertion follows from how products
are formed in $\Pre$.
\item
Suppose that $T$ satisfies the Beck-Chevalley Condition.
We prove that $\LL T$ does satisfy it again.

It suffices to prove that $\LL$ satisfies the Beck-Chevalley
Condition. The image of square~\refeq{eq:exact_square} under
$\LL$ is the square
$$
\xymatrix{
\LL\P
\ar[0,1]^-{\LL p_1}
\ar[1,0]_{\LL p_0}
\ar @{} [1,1]|{\nearrow}
&
\LL\B
\ar[1,0]^{\LL g}
\\
\LL\A
\ar[0,1]_-{\LL f}
&
\LL\C
}
$$
First recall how $\LL$ is defined on monotone maps:
for example, $\LL f:\LL\A\to\LL\C$ is defined as
a left Kan extension along $f^\op:\A^\op\to\C^\op$.
This means that, for every lowerset $W:\A^\op\to\Two$,
$$
(\LL f)(W)
=
\bigvee_a \C^\op(f^\op a,{-})\wedge Wa
$$
or, in a more readable fashion,
$$
(\LL f)(W):
c\mapsto
\bigvee_a \C(c,fa)\wedge Wa
$$
Hence $c$ is in the lowerset $(\LL f)(W)$
iff there is $a$ in $W$ such that $c\leq fa$.
Observe that $\LL$ is indeed
a functor: it clearly preserves identities and
composition (for that, see Theorem~4.47 of~\cite{kelly:book})
up to isomorphisms. But these canonical isomorphisms are
identities, since $[\X^\op,\Two]$ is always a {\em poset\/}.

We employ Example~\ref{ex:square+adjoints}:
both $\LL f$ and $\LL p_1$ are left adjoints with
$(\LL f)^r=[f^\op,\Two]$ and $(\LL p_1)^r=[p_1^\op,\Two]$.
Hence it suffices to prove that 
$$
\LL p_0\o [p_1^\op,\Two]
=
[f^\op,\Two]\o \LL g
$$
Moreover, by the density of principal lowersets
of the form $\B({-},b_0)$ in $\LL\B$ and the fact that
all the monotone maps $\LL p_0$, $[p_1^\op,\Two]$,
$[f^\op,\Two]$, $\LL g$ preserve suprema
(since they all are left adjoints), it suffices
to prove that
\begin{equation}
\label{eq:L}
(\LL p_0\o [p_1^\op,\Two])(\B({-},b_0))
=
([f^\op,\Two]\o \LL g)(\B({-},b_0))
\end{equation}
holds for all $b_0$.

The left-hand side is isomorphic to
$$
\LL p_0 (\B(p_1{-},b_0))
=
a\mapsto \bigvee_w \A(a,p_0 w)\wedge \B(p_1 w,b_0)
$$
By exactness of~\refeq{eq:exact_square}, this means that
$$
\LL p_0 (\B(p_1{-},b_0))
=
a\mapsto\C(fa,gb_0)
$$
Observe further that
$$
(\LL g)(\B({-},b_0))
=
c\mapsto
\bigvee_b \C(c,gb)\wedge \B(b,b_0)
$$
hence
$$
(\LL g)(\B({-},b_0))
=
c\mapsto
\C(c,gb_0)
$$
by the Yoneda Lemma.

The right hand side of~\refeq{eq:L} is therefore
isomorphic to
$$
([f^\op,\Two]\o \LL g)(\B({-},b_0))
=
[f^\op,\Two](c\mapsto \C(c,gb_0))
=
a\mapsto\C(fa,gb_0)
$$
\end{enumerate}
\end{example}

\begin{example}
\label{ex:pre-to-pos}Recall the adjunction $Q\dashv I:\Pos\to\Pre$,
where $I$ is the inclusion functor and $Q(\A)$ is the quotient of $\A$
obtained by identifying $a$ and $b$ whenever $a\le b$ and $b\le
a$. The functors $Q$ and $I$ are locally monotone and map exact
squares to exact squares. Hence, if $T:\Pre\to\Pre$ satisfies the BCC,
so does $QTI:\Pos\to\Pos$.

\end{example}

\begin{example}\label{ex:powerset}The {\em powerset functor\/} $\PP:\Pre\to \Pre$ is
defined as follows. The order on $\PP\A$ is the Egli-Milner preorder, that is, $\PP(A,B)=1$
if and only if
\begin{equation}
\forall a\in A\ \exists b\in B\ a\le b
\textrm{ and }
\forall b\in B \ \exists a\in A\ a\le b
\end{equation}
$\PP f(A)$ is the direct image of $A$.
The functor $\PP$ is locally monotone and satisfies the
BCC.

The {\em finitary powerset functor} $\PP_\omega$ is defined similarly: $\PP_\omega\A$ consists of the finite subsets of $\A$ equipped with the Egli-Milner preorder.   $\PP_\omega$ is locally monotone and satisfies the
BCC.

The powerset functor $\mathbb P$ is locally monotone and satisfies the
BCC. This follows from the unnumbered corollary of
Section~\ref{sec:extension}. For a direct argument consider an exact
square:
\begin{equation}
\vcenter{
\xymatrix{
\P
\ar[0,1]^-{p_1}
\ar[1,0]_{p_0}
\ar @{}[1,1]|{\nearrow}
&
\B
\ar[1,0]^{g}
\\
\A
\ar[0,1]_-{f}
&
\C
}
}
\end{equation}
By~\refeq{eq:exactness_coend} we have to show that for
$A\in\PP\A$ and $B\in\PP\B$
\begin{equation}
\PP\C(\PP f(A),\PP g(B))
=
\bigvee_W \PP\A(A,\PP p_0 (W))\wedge\PP\B(\PP p_1(W),B)
\end{equation}
Assume $\PP\C(\PP f(A),\PP g(B))=1$. Then
\begin{equation}
\label{equ:conv-pres-ex-sq}
\forall a\in A\ \exists b\in B\ fa\le gb
\textrm{ and }
\forall b\in B \ \exists a\in A\ fa\le gb
\end{equation}
We have to find $W\in\PP\P$ such that
$\PP\A(A,\PP p_0 (W))$ and $\PP\B(\PP p_1(W),B)$.
Let
$
W=
\{
w\in\P\mid\exists a\in A\ a\le p_0w \textrm{ and } \exists b\in B\ p_1w\le b
\}
$.
It is easy to see that $W$ satisfies
$\forall w\in W\ \exists a\in A\ \A(a,p_0w)$ and
$\forall w\in W\ \exists b\in B\ \B(p_1w,b)$. Consider
$a\in A$. By~\refeq{equ:conv-pres-ex-sq} there exists $b\in B$
such that $\C(fa,gb)$. By~\refeq{eq:exactness_coend} there exists
$w\in W$ such that $\A(a, p_0w)$. So
$\PP\A(A,\PP p_0 (W))=1$.
Similarly, we can show that for all $b\in B$ exists
$w\in W$ with $\B(p_1w,b)$. This shows that $\PP$ preserves exact
squares, hence it satisfies the BCC.

The proof that
$\PP_\omega$ satisfies the BCC goes along the same lines.
\end{example}

\begin{example}
\label{ex:convex}
Given a preorder $\A$, a subset $A\subseteq \A$ is called
{\em convex\/} if $x\le y\le z$ and $x,z\in A$ imply $y\in A$.

The {\em convex powerset functor\/} $\PP^c:\Pos\to \Pos$ is defined as
follows. $\PP^c\A$ is the set of convex subsets of $\A$ endowed with
the Egli-Milner order. $\PP^c f(A)$ is the direct image of $A$. This
is a well defined locally monotone functor. Notice that $\PP^c\cong
Q\PP I$. This follows from the fact that if $\A$ is a poset and
$A,B\in\PP I\A$, then $\PP I\A(A,B)=1$ and $\PP I\A(B,A)=1$ if and
only if $A$ and $B$ have the same convex hull.  Hence, by
Example~\ref{ex:pre-to-pos}, $\PP^c$ satisfies the BCC.

The {\em finitely-generated convex powerset\/} $\PP^c_\omega$
is defined similarly to $\PP^c$. The only difference is that
the convex sets appearing in $\PP^c_\omega\A$ are convex hulls
of finitely many elements of $\A$. Then  $\PP^c_\omega$
is locally monotone and is isomorphic to $Q\PP_\omega I$, thus it also
satisfies the BCC. Again, we have 
that $\PP^c_\omega=Q\PP_\omega I$ and $\PP^c_\omega$ satisfies the BCC.

Observe that both functors are self-dual: $(\PP^c)^\partial=\PP^c$
and $(\PP^c_\omega)^\partial=\PP^c_\omega$.
\end{example}

\begin{example}
Since the lowerset functor $\LL:\Pre\to\Pre$
satisfies the Beck-Chevalley Condition by
Example~\ref{ex:Kripke-polynomial},
we can compute its lifting
$\ol{\LL}:\Rel{\Pre}\to\Rel{\Pre}$.
We show how $\ol{\LL}$ works on the
relation
$
\xymatrix@1{
\A
\ar[0,1]|-{\object @{/}}^-{R}
&
\B
}
$.
The value $\ol{\LL}(R)$
is, by Theorems~\ref{th:universal_property}
and~\ref{th:extension}, given by
$(\LL d_0)_\diamond \o (\LL d_1)^\diamond$
where
$(d_0,\E^R,d_1):\A\to \B$
is the two-sided discrete fibration corresponding to $R$.
Using the formula~\refeq{eq:relation_composition}
for relation composition, we can write
\begin{equation}\label{equ:acoend}
\ol{\LL}(R)(B,A)
=
\bigvee_W \LL\B(B,\LL d_0 (W))\wedge\LL\A(\LL d_1(W),A)
\end{equation}
where $B:\B^\op\to\Two$ and $A:\A^\op\to\Two$ are arbitrary
lowersets.
Since $\LL d_1$ is a left adjoint to restriction along
$d_1^\op:(\E^R)^\op\to\A^\op$, we can rewrite~\refeq{equ:acoend}
to
$$
\ol{\LL}(R)(B,A)
=
\bigvee_W \LL\B(B,\LL d_0 (W))\wedge\LL\E^R(W,A\o d_1^\op)
$$
and, by the Yoneda Lemma, to
$$
\ol{\LL}(R)(B,A)
=
\LL\B(B,\LL d_0 (A\o d_1^\op))
$$
Hence the lowersets $B$ and $A$ are related by $\ol{\LL}(R)$
if and only if the inclusion
$$
B\subseteq \LL d_0 (A\o d_1^\op)
$$
holds in $[\B^\op,\Two]$.
Recall that
$$
\LL d_0 (A\o d_1^\op)
=
b\mapsto\bigvee_w \B(b,d_0 w)\wedge (A\o d_1^\op)(w)
$$
Therefore the inclusion $B\subseteq \LL d_0 (A\o d_1^\op)$
is equivalent to the statement:
For all $b$ in $B$ there is $(b_1,a_1)$
such that $R(b_1,a_1)$ and $b\leq b_1$ and $a_1$ in $A$.

Observe that the above condition is reminiscent of
one half of the Egli-Milner-style of the relation
lifting of a powerset functor. This is because
$\LL$ is the ``lower half'' of two possible
``powerpreorder functors''. The ``upper half''
is given by $\UU:\Pre\to\Pre$ where $\UU=\LL^\partial$.
\end{example}

\begin{example}
\label{ex:convex_lifted}
The relation liftings $\ol{\PP}$, $\ol{\PP^c}$, $\ol{\PP_\omega}$,
$\ol{\PP^c_\omega}$ of the (convex) powerset functor and their
finitary versions yield the ``Egli-Milner'' style of the relation
lifting. More precisely, for a relation $ \xymatrix@1{ \B
  \ar[0,1]|-{\object@{/}}^-{R} & \A } $ we have $\ol{\PP}(R)(B,A)$
(respectively $\ol{\PP_{\omega}}(R)(B,A)$, $\ol{\PP^{c}}(R)(B,A)$,
$\ol{\PP^{c}_{\omega}}(R)(B,A)$) if and only if
$$\forall a\in A\ \exists b\in B\ R(b,a)\textrm{ and }
\forall b\in B\ \exists a\in A\ R(b,a).$$
To compute the lifting of $\PP^c$, consider a
monotone relation
$
\xymatrix@1{
\A
\ar[0,1]|-{\object@{/}}^-{R}
&
\B
}
$
and the induced fibration $(d_0,\E,d_1):\A\to\B$. We know that
$\ol{\PP^c}(R)=(\PP^c d_0)_\diamond \o (\PP^c d_1)^\diamond$, so
\begin{equation}
\ol{\PP^c}(R)(B,A)
=
\bigvee_E \PP^c\B(B,\PP^c d_0(E))\wedge\PP\A(\PP^c d_1(E),A)
\end{equation}

We prove that $\ol{\PP^c}(R)(B,A)=1$ implies
$\forall a\in A\ \exists b\in B\ R(b,a)$ and
$\forall b\in B\ \exists a\in A\ R(b,a)$.
Consider a witness $E$ and $a\in A$. Since
$\PP^c\A(\PP^c d_1 (E),A)=1$, there exists $(b',a')\in E$ such that
$\A(a',a)$. Since $\PP^c\B(B,\PP^c d_0(E))=1$, there exists
$b\in B$ such that $\B(b,b')$. Since $R$ is monotone and
$R(b',a')=1$ we obtain $R(b,a)=1$. So
$\forall a\in A\ \exists b\in B\ R(b,a)$.
The second part is analogous.

Conversely, if
$\forall a\in A\ \exists b\in B\ R(b,a)$ and
$\forall b\in B\ \exists a\in A\ R(b,a)$, define
the subset of $\E$ as follows:
$$
E=
\{
\xymatrix@1{b\ar@{~>}[0,1]&a}
\mid
b\in B,\
a\in A
\}
$$
Then $E$ is convex, since both $B$ and $A$ are convex. Both $\PP^c\B(B,\PP^c d_0 (E))=1$ and $\PP^c\A(\PP^c d_1(E),A)=1$ hold for obvious
reasons. Hence $\ol{\PP^c}(R)(B,A)=1$ holds.
\end{example}

\begin{example}
\label{ex:notBCC}
To find a functor that does not satisfy the BCC,
it suffices, by Remark~\ref{rmk:BCC}, to find a locally monotone functor
$T:\Pre\to\Pre$ that does not preserve
order-embeddings.
For this, let $T$ be the {\em connected components functor\/},
i.e., $T$ takes a preorder $\A$ to the discretely ordered
poset of connected components of $\A$.
$T$ does not preserve embedding $f:\A\to\B$ indicated below.
$$
\let\objectstyle=\scriptstyle
\xy <1 pt,0 pt>:
    (000,000)  *++={};
    (030,030) *++={} **\frm{.};
    (070,000)  *++={};
    (100,040) *++={} **\frm{.}
\POS(015,-10) *{\A} = "A";
    (085,-10) *{\B} = "B";
    (005,015) *{\bullet};
    (025,015) *{\bullet};
    (005,010) *{a} = "aA";
    (025,010) *{b} = "bA";
    (075,015) *{\bullet} = "a0B";
    (095,015) *{\bullet} = "b0B";
    (075,010) *{a} = "aB";
    (095,010) *{b} = "bB";
    (085,030) *{\bullet} = "c0B";
    (085,035) *{c} = "cB";
\POS "a0B" \ar@{-} "c0B";
\POS "b0B" \ar@{-} "c0B";
\endxy
$$
\end{example}

\section{An Application: Moss's Coalgebraic Logic over Posets}
\label{sec:Coalg-posets} We show how to develop the basics of
Moss's coalgebraic logic over posets. For reasons of space, this
development will be terse and assume some familiarity with, e.g.,
Sections~2.2 and~3.1 of~\cite{kurz-leal:mfps09}.

Since the logics will have propositional connectives but no
negation (to capture the semantic order on the logical side)
we will use the category $\DL$ of bounded distributive
lattices. We write $F\dashv U:\DL\to\Pos$ for the obvious adjunction;
and $P:\Pos^\op\to\DL$ where $UP\X=[\X,\Two]$ and $S:\DL\to\Pos^\op$
where $SA=\DL(A,\Two)$.
Note that $UP=[-,\Two]$ and recall $\LL=[({-})^\op,\Two]$.
Further, let $T:\Pos\to\Pos$ be a locally monotone finitary functor
that satisfies the BCC.

We define coalgebraic logic abstractly by a functor
$L:\DL\to\DL$ given as
$$
L=FT^\partial U
$$
where the functor $T^\partial:\Pos\to\Pos$ is given by $ T^\partial \X
= (T(\X^\op))^\op$.
By Example~\ref{ex:Kripke-polynomial}, $T^\partial$ satisfies the BCC.
The formulas of the logic are the elements of the initial $L$-algebra
$FT^\partial U (\Lang)\to \Lang$.  The formula given by some
$\alpha\in T^\partial U (\Lang)$ is written as
$\nabla\alpha.$
The semantics is given by a natural transformation
$$
\delta:LP\to PT^\op
$$
Before we define $\delta$, we need for every preorder $\A$, the
relation\footnote{The type of $\ni_{\X}$ conforms with the logical reading
  of $\ni$ as $\Vdash$. Indeed, ${\ni}(x,\phi)\ \& \ \phi\subseteq\psi\
  \Rightarrow\ {\ni}(x,\psi)$ and ${\ni}(x,\phi) \ \&\ x\le y\
  \Rightarrow\ {\ni}(y,\phi)$, where $\phi,\psi$ are uppersets of $\X$.
}
$$
\xymatrix@1{
[\A,\Two]
\ar[0,1]|-{\object @{/}}^-{\ni_\A}
&
\A^\op
}
$$
given 
by the evaluation map $\ev_\A:\A\times [\A,\Two]\to\Two$. Observe that
\begin{equation}\label{equ:ni-yon}
\ni_\A=(\yon_{\A^\op})^\diamond
\end{equation}
since $ (\yon_{\A^\op})^\diamond(a,V)= [\A,\Two](\yon_{\A^\op}a,V)= Va
$ holds by the Yoneda Lemma.

\begin{lemma}\label{lem:ni}
For every monotone map $f:\A\to\B$ we have
$$
\xymatrix@R=16pt{
[\A,\Two]
\ar[0,2]|-{\object @{/}}^-{\ni_\A}
&
&
\A^\op
\\
[\B,\Two]
\ar[0,2]|-{\object @{/}}_-{\ni_\B}
\ar[-1,0]|-{\object @{/}}^-{[f,\Two]^\diamond\ }
&
&
\B^\op
\ar[-1,0]|-{\object @{/}}_-{\ (f^\op)^\diamond}
}
$$
\end{lemma}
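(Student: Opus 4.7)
The plan is to verify the asserted commutativity of the square in $\Rel{\Pre}$ by a direct pointwise calculation. Both composites are monotone relations $[\B,\Two]\to\A^\op$, equivalently monotone maps $\A\times[\B,\Two]\to\Two$, so it will suffice to evaluate each at a generic pair $(a,V)$ and check that both give $V(fa)$.

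For the right-then-up composite $(f^\op)^\diamond\cdot\ni_\B$, I would apply the composition formula~\refeq{eq:relation_composition} together with $\ni_\B(b,V)=V(b)$ (which is equation~\refeq{equ:ni-yon} unfolded by the Yoneda lemma) and $(f^\op)^\diamond(a,b)=\B^\op(f^\op a,b)=\B(b,fa)$. This yields
\[
((f^\op)^\diamond\cdot\ni_\B)(a,V)=\bigvee_{b\in\B}V(b)\wedge\B(b,fa),
\]
and monotonicity of $V$ (or a second use of Yoneda) collapses the supremum to $V(fa)$.

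For the up-then-right composite $\ni_\A\cdot[f,\Two]^\diamond$ I would run a symmetric calculation. Unfolding $[f,\Two]^\diamond(W,V)=[\A,\Two](W,V\o f)$ from the definition of the diamond applied to the monotone map $[f,\Two]\colon V\mapsto V\o f$, and using $\ni_\A(a,W)=W(a)$, the composition formula gives
\[
(\ni_\A\cdot[f,\Two]^\diamond)(a,V)=\bigvee_{W\in[\A,\Two]}[\A,\Two](W,V\o f)\wedge W(a),
\]
which reduces by the co-Yoneda identity (taking $W=V\o f$ for the lower bound and using $W\le V\o f$ for the upper) to $(V\o f)(a)=V(fa)$.

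The two composites thus coincide, proving the lemma. The only subtle aspect is bookkeeping: aligning the variances of the arguments of $\ni_{(-)}$ and $({-})^\diamond$ with the directions in the square, and invoking Yoneda in its ordinary form on one side and its co-form on the other. No deeper machinery is needed beyond these two applications of Yoneda.
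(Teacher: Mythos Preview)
Your proposal is correct and coincides with the paper's own ``Computational Proof'': a direct pointwise verification, via the composition formula~\refeq{eq:relation_composition} and two applications of the Yoneda lemma, that both composites evaluate to $V(fa)$. The paper also records an alternative ``Diagrammatic Proof'' that derives the square more conceptually from the naturality of the Yoneda embedding by applying $({-})_\diamond$ and then passing to right adjoints (using $\LL(f^\op)\dashv[f,\Two]$); this is worth a look but is not needed for correctness of your argument.
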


\begin{proof}[Diagrammatic Proof.]
The square
$$
\xymatrix{
\A^\op
\ar[0,2]^-{\yon_{\A^\op}}
\ar[1,0]_{f^\op}
&
&
\LL(\A)
\ar[1,0]^{\LL (f)}
\\
\B^\op
\ar[0,2]_-{\yon_{\B^\op}}
&
&
\LL(\B)
}
$$
commutes in $\Pre$, since $\yon$ is natural.
Hence the square
$$
\xymatrix{
\A^\op
\ar[0,2]|-{\object @{/}}^-{(\yon_{\A^\op})_\diamond}
\ar[1,0]|-{\object @{/}}_{(f^\op)_\diamond}
&
&
\LL(\A)
\ar[1,0]|-{\object @{/}}^{(\LL (f))_\diamond}
\\
\B^\op
\ar[0,2]|-{\object @{/}}_-{(\yon_{\B^\op})_\diamond}
&
&
\LL(\B)
}
$$
commutes in $\Rel{\Pre}$ since $({-})_\diamond$ is a 2-functor.

Now observe that $\LL(f)\dashv [f,\Two]$ holds by the definition
of $\LL$ on morphisms. Hence $(\LL(f))^\diamond\dashv [f,\Two]^\diamond$
holds. Since adjoints are determined uniquely up to isomorphisms,
this shows that $(\LL(f))_\diamond=[f,\Two]^\diamond$ (we use that
isomorphisms are identities in $\Rel{\Pre}$).

Thus, taking right adjoints everywhere in the above square
we obtain the square from the claim of the lemma.
\end{proof}

\begin{proof}[Computational Proof.]
By definition
\begin{eqnarray*}
{\ni_\A}\o [f,\Two]^\diamond(a,V)
&=
&
\bigvee_W {\ni_\A}(a,W)\wedge [\A,\Two](V\o f,W)
\\
&=
&
{\ni_\A}(a,V\o f)
\\
&=
&
(V\o f)(a)
\end{eqnarray*}
where the second 
step is due to the Yoneda Lemma.
Analogously:
\begin{eqnarray*}
(f^\op)^\diamond\o{\ni_\B}(a,V)
&=
&
\bigvee_b \B^\op(f^\op a,b)\wedge {\ni_\B}(b,V)
\\
&=
&
{\ni_\B}(fa,V)
\\
&=&
V(fa)
\end{eqnarray*}
\end{proof}

\begin{corollary}
\label{cor:tau}
For every locally monotone functor $T$ that
satisfies the Beck-Chevalley Condition and for
every monotone map $f:\A\to\B$, we have 
$$
\xymatrix@R=16pt{
\ol{T}[\A,\Two]
\ar[0,2]|-{\object @{/}}^-{\ol{T}{\ni_\A}}
&
&
\ol{T}\A^\op
\\
\ol{T}[\B,\Two]
\ar[0,2]|-{\object @{/}}_-{\ol{T}{\ni_\B}}
\ar[-1,0]|-{\object @{/}}^-{\ol{T}[f,\Two]^\diamond\ }
&
&
\ol{T}\B^\op
\ar[-1,0]|-{\object @{/}}_-{\ \ol{T}(f^\op)^\diamond}
}
$$
\end{corollary}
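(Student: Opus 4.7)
The plan is to reduce the corollary directly to Lemma~\ref{lem:ni} by invoking the 2-functoriality of $\ol{T}$. First I would unpack what the commutativity of the square in the claim means as an equation in $\Rel{\Pre}$: it is precisely the equality
\[
\ol{T}\ni_\A \,\o\, \ol{T}[f,\Two]^\diamond \;=\; \ol{T}(f^\op)^\diamond \,\o\, \ol{T}\ni_\B.
\]
Since $T$ satisfies the BCC, Theorem~\ref{th:extension} gives us a 2-functor $\ol{T}:\Rel{\Pre}\to\Rel{\Pre}$, in particular $\ol{T}$ preserves composition of monotone relations. So it will suffice to apply $\ol{T}$ to the equation witnessing the commutativity of the square in Lemma~\ref{lem:ni}, namely
\[
\ni_\A \,\o\, [f,\Two]^\diamond \;=\; (f^\op)^\diamond \,\o\, \ni_\B,
\]
and then use functoriality to distribute $\ol{T}$ over the composition on each side.

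The only subtlety worth mentioning is that the corollary's notation uses $\ol{T}[f,\Two]^\diamond$ and $\ol{T}(f^\op)^\diamond$, i.e.\ $\ol{T}$ applied to right-adjoint relations of the form $g^\diamond$. These are well-defined values of $\ol{T}$ on objects of $\Rel{\Pre}$, and moreover, because $g_\diamond\dashv g^\diamond$ in $\Rel{\Pre}$ and a 2-functor preserves adjunctions, we have $\ol{T}(g^\diamond) = (\ol{T}g_\diamond)^r = ((Tg)_\diamond)^r = (Tg)^\diamond$. (This last identification is not strictly needed for the proof of the corollary as stated, but it is useful for sanity-checking what the liftings look like.)

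I do not expect any real obstacle: the serious computation was carried out inside Lemma~\ref{lem:ni}, and the corollary is essentially a functoriality remark. The only thing to be careful about is that we are working in $\Rel{\Pre}$ where isomorphism 2-cells are identities, so the equations pushed through $\ol{T}$ remain strict equalities rather than merely invertible 2-cells — but this is exactly the property of $\Rel{\Pre}$ that was highlighted in the first remark after the definition of $\Rel{\Pre}$.
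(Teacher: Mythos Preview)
Your proposal is correct and matches the paper's intended argument: the corollary is stated without proof, immediately after Lemma~\ref{lem:ni}, precisely because it follows by applying the 2-functor $\ol{T}$ (obtained from Theorem~\ref{th:extension}) to the commuting square of that lemma and using preservation of composition. Your side remark identifying $\ol{T}(g^\diamond)=(Tg)^\diamond$ via preservation of adjunctions is also in line with how the paper uses $\ol{T}$ elsewhere.
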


\noindent Coming back to $\delta:LP\to PT^\op$. It suffices, due to
$F\dashv U$, to give 
$$
\tau:T^\partial UP\to UPT^\op
$$
Observe that, for every preorder $\X$, we have
$$
UPT^\op(\X)=
[T^\op\X,\Two]=
\LL ((T^\op\X)^\op)
$$
By Proposition~\ref{prop:relations_are_kleisli}, to define $\tau_\X$
it suffices to give a relation from
$T^\partial UP\X$ to $(T^\op\X)^\op$, and we obtain it from
Theorem~\ref{th:extension} by applying $\ol{T^\partial}$ to the
relation $\ni_\X$. That $\tau_\X$ so defined is natural, follows
from Corollary~\ref{cor:tau}. This follows~\cite{kkv:aiml08}
with the exception that here now we need to use $T^\partial$.

\begin{example}
Recall the functor $\PP^c_\omega$ of Example~\ref{ex:convex}
and consider a
coalgebra $c:\X\to \PP^c_\omega\X$. On the logical side we allow ourselves to write
$\nabla\alpha$ for any finite subset $\alpha$ of $U (\Lang)$. Of
course, we then have to be careful that the
semantics of $\alpha$ agrees with the semantics of the convex
closure of $\alpha$. Interestingly, this is done automatically by
the machinery set up in the previous section, since $\PP^c_\omega=Q\PP_\omega I$ and all these functors are self-dual. By Example~\ref{ex:convex_lifted}, the semantics of $\nabla\alpha$ is given by
$$
x\Vdash\nabla\alpha
\quad
\Leftrightarrow
\quad
\forall y\in c(x)\exists\phi\in\alpha.y\Vdash\phi
\ \textrm{ and }
\forall\phi\in \alpha\exists y\in c(x).y\Vdash\phi.
$$
\end{example}

\section{Conclusions}
\label{sec:conclusions}
We hope to have illustrated in the previous two sections that, after
getting used to handle the $(-)_\diamond, (-)^\diamond$ and $(-)^\op$,
the techniques developed here work surprisingly smoothly and will be
useful in many future developments. For example, an observation
crucial for both \cite{kkv:aiml08,kurz-leal:mfps09} is that composing
the singleton map $X\to\P X$, $x\mapsto\{x\}$, with the relation
$\xymatrix@1{\ni_X:\P X\ar[0,1]|-{\object@{/}}& X}$ is $\id_X$.
Referring back to~\refeq{equ:ni-yon}, we find here the same
relationship
$$
{\ni}_\A\circ (\yon_{\A^\op})_\diamond
=
(\yon_{\A^\op})^\diamond\circ(\yon_{\A^\op})_\diamond=\id_{\A^\op}
$$
The question whether the completeness proof of \cite{kkv:aiml08} and
the relationship between $\nabla$ and predicate liftings of
\cite{kurz-leal:mfps09} can be carried over to our setting are a
direction of future research.

Another direction is the generalisation to categories which
are enriched over more general structures than $\Two$, such as
commutative quantales. Simulation, relation lifting and final
coalgebras in this setting have been studied in \cite{worrell:cmcs00}.

\end{document}